\documentclass[preprint,5p,times,twocolumn,sort&compress]{elsarticle} 
\usepackage{lineno,hyperref}

%\AtBeginDocument{%
%	\addtolength\abovedisplayskip{-0.5\baselineskip}%
%	\addtolength\belowdisplayskip{-0.5\baselineskip}%
	%  \addtolength\abovedisplayshortskip{-0.5\baselineskip}%
	%  \addtolength\belowdisplayshortskip{-0.5\baselineskip}%
%}
\bibliographystyle{elsarticle-num}
%\bibliographystyle{IEEEtran}
%%%%%%%%%%%%%%%%%%%%%%%

%\hyphenation{op-tical net-works semi-conduc-tor fol-lower}

%\usepackage{hyperref}
\usepackage[cmex10]{amsmath}
\usepackage{amssymb}
\usepackage{mathrsfs}
\usepackage{amsthm}

\usepackage{graphicx}
\usepackage{color}
\usepackage{algorithm}
\usepackage{algorithmic}
\usepackage{bm}
\usepackage{setspace}
\hypersetup{hidelinks}

%\DeclareGraphicsExtensions{.pdf,.png,.jpg,.eps,.ps}
%\makeatletter
%\renewcommand*{\@opargbegintheorem}[3]{\trivlist
%	\item[\hskip \labelsep{\bfseries #1\ #2}] \textbf{(#3)}\ \itshape}
%\makeatother

%\renewcommand{\algorithmicrequire}{{\bf Input:}}
%\renewcommand{\algorithmicensure}{{\bf Output:}}
%
\usepackage{comment}
\usepackage{ifthen}
\newboolean{showcomments}
\setboolean{showcomments}{true}

\allowdisplaybreaks
\journal{Automatica}
%\linenumbers
%\modulolinenumbers[5]

%\journal{Automatica}
\theoremstyle{definition}
\newtheorem{thm}{Theorem}
\newtheorem{lem}{Lemma}
\newtheorem{cor}{Corollary}

\theoremstyle{definition}
\newtheorem{defn}{Definition}
\newtheorem{rem}{Remark}
\newtheorem{cond}{Condition}

{\newtheorem{assum}{\textit{Assumption}}}

\setlength{\textfloatsep}{2pt}
\setlength{\abovecaptionskip}{2pt}
\setlength{\belowcaptionskip}{2pt}
\setlength{\intextsep}{2pt}
\setlength{\textfloatsep}{2pt}
\setlength{\floatsep}{2pt}
\setlength{\abovedisplayskip}{2pt}
\setlength{\belowdisplayskip}{2pt}

\begin{document}
\graphicspath{{Paper_Fig/}}
\setstretch{1}
\begin{frontmatter}
\title{On Nash-Stackelberg-Nash Games under Decision-Dependent Uncertainties: \\ Model and Equilibrium}

\tnotetext[mytitlenote]{This work was supported by the Joint Research Fund in Smart Grid under cooperative agreement between the National Natural Science Foundation of China (NSFC) and State Grid Cooperation of China (No.1966601). Corresponding author: Feng Liu.}

%% or include affiliations in footnotes:
\author[thu]{Yunfan~Zhang}
\author[thu]{Feng~Liu\corref{mycorrespondingauthor}}%\ead{lfeng@mail.tsinghua.edu.cn}
\author[sju]{Zhaojian Wang}
\author[cuh]{Yue Chen}
\author[cepri]{Shuanglei Feng}
\author[tud]{Qiuwei Wu}
\author[hku]{Yunhe Hou}

\cortext[mycorrespondingauthor]{Corresponding author}

\address[thu]{State Key Laboratory of Power Systems, Department of Electrical Engineering, Tsinghua University, Beijing, China}
\address[sju]{MOE Key Laboratory of System Control and Information Processing, Department of Automation, Shanghai Jiao Tong University, Shanghai, China}
\address[cuh]{Department of Mechanical and Automation Engineering, The Chinese University of Hong Kong, Hong Kong SAR, China}
\address[cepri]{Renewable Energy Research Center, China Electric Power Research Institute, Beijing, China}
\address[tud]{Center for Electric Power and Energy, Department of Electrical Engineering, Technical University of Denmark, Kgs. Lyngby, Denmark}
\address[hku]{Shenzhen Institute of Research and Innovation, The University of Hong Kong, Hong Kong SAR, China}

\begin{abstract}
In this paper, we discuss a class of two-stage hierarchical games with multiple leaders and followers, which is called Nash-Stackelberg-Nash (N-S-N) games. Particularly, we consider N-S-N games under decision-dependent uncertainties (DDUs). DDUs refer to the uncertainties that are affected by the strategies of decision-makers and have been rarely addressed in game equilibrium analysis. In this paper, we first formulate the N-S-N games with DDUs of complete ignorance, where the interactions between the players and DDUs are characterized by uncertainty sets that depend parametrically on the players' strategies. Then, a rigorous definition for the equilibrium of the game is established by consolidating generalized Nash equilibrium and Pareto-Nash equilibrium. Afterward, we prove the existence of the equilibrium of N-S-N games under DDUs by applying Kakutani's fixed-point theorem. Finally, an illustrative example is provided to show the impact of DDUs on the equilibrium of N-S-N games.
\end{abstract}

\begin{keyword}
	Nash-Stackelberg-Nash game, decision-dependent uncertainties, Nash equilibrium, existence.
\end{keyword}
\end{frontmatter}

\section{Introduction}
\subsection{Background}
Game theory provides a powerful tool to deal with decision-making problems with multiple players in various disciplines, ranging from economy, military, politics, to social science and engineering. Hierarchical games model sequential decision-making problems and can be divided into two-stage hierarchical games and multi-stage ones, according to the number of decision stages or levels. The two-stage hierarchical game with multiple leaders and followers is called Nash-Stackelberg-Nash (N-S-N) game in this paper.

The existing works on hierarchical games are typically established in deterministic conditions. However, uncertainties, some of which are decision-dependent, inherently exist in many real-world decision-making problems. For example, the uncertain damages from climate change have a dependency on the global climate policy \cite{Webster2012An}. Another example is the demand-response program on buildings' electricity consumption where the reserve demand from the system operator is endogenously uncertain due to its dependency on the building's reserve capacity provided in the day-ahead market \cite{zhang2017}. Uncertainties that can be affected by decisions are called decision-dependent uncertainties (DDUs) or endogenous uncertainties interchangeably. Distinguished from DDUs, decision-independent uncertainties (DIUs), or called exogenous uncertainties, are not affected by the decision-makers' actions.

Fruitful works have addressed the optimization problems under 
DDUs \cite{Nohadani2018,Lappas2018,mypaper2021}. Consideration of DDUs in N-S-N game equilibrium analysis, however, is challenging. The multiple players in an N-S-N game whose payoffs are affected by some uncertain factors, can in turn exercise proactive control on the uncertainties cooperatively or non-cooperatively. In this regard, how to define the equilibrium of an N-S-N game under DDUs and justify its existence appear to be crucial albeit difficult, and have not been addressed to the best of the authors' knowledge.

\subsection{Literature Review}
In the literature, two kinds of epistemic states characterize the uncertainties in games.

\noindent$\cdot$\textit{\textbf{ Partial Ignorance.}} Ignorance is a state of knowledge characterized by knowing nothing or having no reliable information about the matter of interest \cite{2015Decision}. The uncertain factor of partial ignorance is assigned a probability distribution or a set of possible probability distributions that can be incorporated into appropriate mathematical programming models. Motivated by the variety of ideas in stochastic programming, the players in game problems may choose to optimize the expected payoff, the risk measures, or the mean-risk composite indexes to obtain the most favorable outcome in accordance with their risk preferences, see \cite{2017Two} for a review. 

%The probability distribution can be derived according to historical data and/or human opinions. 

\noindent$\cdot$\textit{\textbf{Complete Ignorance.}} Uncertainty is said to be of complete ignorance when the probability distribution of the uncertain factor is difficult to obtain or even unavailable \cite{Giang2016}. Though reliable evidence is absent, experts succeed in giving a deterministic region called \textit{uncertainty set}, within which all possible uncertainty realizations stay. The sustained interest in decision-making in situations of complete ignorance is motivated by many real-world circumstances: adequate historical data, which are necessary for an explicit and credible probability distribution of the unknown factor, are rarely publicly available.

Analogous to robust optimization \cite{ben2009robust} where system's performance under worst-case uncertainty realization is guaranteed, only the extreme consequence caused by the uncertainty matters in an uncertain game problem under complete ignorance. In this sense, the players' best response under uncertainty is stipulated to be the strategy with the best \textit{worst-case performance}; and the uncertain factor is regarded as a \textit{virtual player} standing up to the real players whose strategy set is the uncertainty set that contains all its possible realizations. The real players' interest in hedging against the risks of uncertainty is intrinsically captured by the virtual player's ambition of deteriorating the payoffs of real players. Such a combination of the idea of robust optimization and the concept of Nash equilibrium (NE) \cite{Nash:Noncooperative} renders a distribution-free equilibrium concept called robust-Nash equilibrium (RNE) \cite{Aghassi2006}. The RNE has been studied by \cite{Aghassi2006} for N-person non-cooperative finite games, \cite{Hayashi2005} for bimatrix games, \cite{Nishimura2009,Nishimura2012} for normal-form N-person non-cooperative infinite games, \cite{YuHRobust2013} for multi-objective games, and \cite{2013EXISTENCE,2015MULTI} for leader-follower games.

The aforementioned study of RNE focus on the situations of local uncertainty and the worst case is taken with respect to each player's payoff individually. To further extend the idea of robust game to the situations of shared uncertainty sources, subsequent literature has built novel equilibrium concepts upon the insights of Pareto optimality. Zhukovskii and Tchikry \cite{Zhukovskii} first propose the concept of Nash-Slater equilibrium (NS-equilibrium) for uncertain N-person non-cooperative games, combining the concept of classic NE and Pareto optimality in multi-objective optimization theory. Specifically, the pessimistic and conservative game players assume that the virtual player (the shared uncertainty sources) tries his best to make the payoffs of all real players worse off simultaneously. As such, the worst-case performances of the real players' strategies are given collectively by considering the multi-objective optimization problem faced by the virtual player; and on the NS-equilibrium, the virtual player's strategy is the solution to his corresponding multi-objective optimization problem in the sense of Pareto optimality. Then, the study of NS-equilibrium is followed by Larbani and Lebbah \cite{LARBANI:Aconcept} by introducing the concept of zero-sum equilibrium (ZS-equilibrium) where the Pareto efficiency of real players' strategies is additionally considered. Yang and Pu \cite{ref7} extend the previous work to uncertain leader-follower games, by considering the collective interest of leaders hedging against the risks of uncertainty; then, Zhang, et al \cite{ref10} discuss the existence of equilibrium in the case of one leader and multiple followers and take into account the risk awareness of the followers. Moreover, Nessah, et al \cite{Nessah2015} have extended the notion of ZS-equilibrium to a Coalitional ZP-equilibrium where an internal coalition stability condition for a given coalition structure is added. 

Despite the fruitful work extending the concept of RNE to more complex game situations, they deal with problems where only DIUs are involved and the chosen strategies of players have no effect on the uncertain resolution. Recent advances \cite{zhang2017,Lappas2018,Nohadani2018,mypaper2021} extending robust optimization to dealing with DDUs may provide insights into the mathematical characterization of DDUs with complete ignorance: the conventional `static' uncertainty sets are extended to set-valued maps parameterized in the decision variables. From the game theoretic perspective, the real players and the virtual player (the DDUs) interact not only at the level of payoffs but also at the level of strategy sets, which is similar to the notion of generalized uncertainty introduced for N-person non-cooperative game in \cite{ref8,ref9}. However, recall our goal of this paper, the consideration of DDUs or the so-called generalized uncertainties in hierarchical games would raise more technical challenges for equilibrium analysis, especially when the DDUs are ambiguous on the leader's moves but revealed prior to the followers' moves. To the best of the authors' knowledge, DDUs in leader-follower games have not been investigated.

\subsection{Contribution and Organization}
This paper aims to mathematically formulate the  N-S-N game with DDUs of complete ignorance, and characterize the existence of its Nash equilibrium. Specifically, the following three key issues are addressed: 

(i) We establish the normal-form N-S-N game model with DDUs incorporated and characterize the interaction between the players and the shared uncertain factors at both the level of payoffs and strategy sets. The N-S-N game model under DDUs differs from the existing works \cite{Zhukovskii,LARBANI:Aconcept,ref8,ref9,Nessah2015,ref7,ref10}. On the one hand, our model with hierarchical structure is not implied by the uncertain non-cooperative games in \cite{Zhukovskii,LARBANI:Aconcept,ref8,ref9,Nessah2015} due to the non-uniqueness of the generalized Nash equilibrium (GNE) of the follower-level game. On the other hand, unlike the uncertain leader-follower games in \cite{ref7,ref10}, cognitive differences of the players (featured as leaders and followers) to the uncertainty resolution are specified as the DDUs are ambiguous on the leaders' moves then revealed prior to the followers' moves, which makes the leader-level equilibrium analysis more challenging.

(\romannumeral2) We rigorously define the equilibrium of N-S-N games under DDUs by consolidating the concepts of GNE, RNE, and weak and non-weak Pareto-Nash equilibrium (PNE). Compared with \cite{Zhukovskii,LARBANI:Aconcept,ref7,ref10,Nessah2015,ref8,ref9} where weakly Pareto efficiency is leveraged to characterize the players' collective interest against the uncertainties, our extensions include both weak and non-weak Pareto optimality and respectively render the weak and strong equilibrium of the studied game. We also emphasize the rationality of the equilibrium by invoking the criteria including 1) feasibility; 2) dominance; 3) robustness; and 4) Pareto axiom.

(\romannumeral3) We prove the existence of the equilibrium of N-S-N games under DDUs by applying Kakutani's fixed-point theorem. It is revealed that the equilibrium existence is established on 1) continuity assumptions; 2) compactness assumptions, and 3) (quasi-) convexity assumptions, which are commonly postulated in non-cooperative games and hierarchical games. Our result is a unification and improvement to the existence theorems for GNE and PNE in the literature. Specifically, the (quasi-) convexity assumptions on payoff functions are milder than the properly-quasiconcavelike conditions in \cite{ref7,ref8,ref9}.

The rest of the paper is organized as follows. Notations and preliminaries are presented in Section \ref{NotationsandPreliminaries}. Section \ref{NSNDDUS} derives an N-S-N game model under DDUs and specifies its equilibrium. Some useful remarks are provided as well. Then in Section \ref{Existence}, our main result, the equilibrium existence theorem is given with proof. In section \ref{mathproblem}, an illustrative example is given to show how DDUs affect the equilibrium of N-S-N games. Section \ref{discussion} concludes the paper.

\section{Notations and Preliminaries}
\label{NotationsandPreliminaries}
\subsection{Notations}
%We present some fundamental concepts which will be used in the sequel.

Throughout this paper,  $\mathbb{R}^n$ ($\mathbb{R}^n_+$) is the $n$-dimensional (non-negative) Euclidean space. Unless otherwise specified, we use uppercase letters, for example $X$, to denote non-empty sets in Euclidean space, and ${\rm int}X$  the set of interior points in $X$.  $X\times Y$ denotes the Cartesian product of sets $X$ and $Y$. The sum of two sets refers to $X_1+X_2\triangleq\{x_1+x_2|x_1\in X_1,x_2\in X_2\}$. $X\backslash Y$ denotes the set of elements that belong to $X$ but not to $Y$. Given a collection of points $x_i$ for $i$ in a certain set $N=\{1,...,n\}$, $x\triangleq(x_1,\ldots,x_n)^{\mathsf{T}}$ denotes its assembling form and $x_{-i}\triangleq(x_1,\ldots,x_{i-1},x_{i+1},\ldots,x_n)^{\mathsf{T}}$. Given a collection of sets $X_i$ for $i\in N$, $X\triangleq X_1\times \ldots \times X_n$ and $X_{-i}\triangleq X_1\times\ldots\times X_{i-1}\times X_{i+1}\times \ldots \times X_n$. Let $\left\{x^k\right\}\rightarrow x$ denote the limit of a point sequence $\left\{x^k\right\}$.

$\mathcal{F}:X\rightrightarrows Y$ denotes a set-valued map if $\mathcal{F}(x)$ is a non-empty subset of $Y$ for all $x\in X$. The graph of $\mathcal{F}$ is defined as $\text{graph}\mathcal{F}\triangleq\left\{(x,y)\in X\times Y|y\in \mathcal{F}(x)\right\}$. $\mathcal{F}$ is convex iff $\text{graph}\mathcal{F}$ is a convex set. $\mathcal{F}$ is closed iff $\text{graph}\mathcal{F}$ is a closed set. We will establish our work on the semicontinuity property of set-valued maps, which are well-documented in \cite{Aubin:Differential}.

\subsection{Non-cooperative Games}
In this subsection, we start with the classic N-person non-cooperative game, then briefly review four typical extensions of non-cooperative games: (\romannumeral1) generalized game; (\romannumeral2) multi-objective game; (\romannumeral3) uncertain game; and (\romannumeral4) hierarchical game, for better understanding the more complicated N-S-N game under DDUs in this paper. 

\noindent$\cdot$\textit{\textbf{{Non-cooperative Game}:}} $\Theta\triangleq\{N,X_i,f_i\}$ is an N-person non-cooperative game where $N=\{1,...,n\}$ denotes the set of $n$ players. For any $i\in N$, $X_i$ is the strategy space of player $i$ and $f_i:X\rightarrow \mathbb{R}^1$ is the payoff function of player $i$. $x^*\in X$ is the NE \cite{Nash:Noncooperative} of $\Theta$ if for any $i\in N$,
\begin{eqnarray}
\notag
f_i(x_i^*,x_{-i}^*)=\max_{x_i}f_i(x_i,x_{-i}^*)\ {\rm s.t.}\ x_i\in X_i.
\end{eqnarray}
Next, we show how $\Theta$ is extended to four different scenarios and the corresponding variants of NE.

\noindent$\cdot$\textit{\textbf{{Generalized Game}:}} In a generalized game, considering shared constraints among players, each player's strategy belongs to a so-called feasible strategy set that depends upon his rivals' strategies. Specifically, $\Theta^{\rm G}\triangleq\{N,\mathcal{X}_i,f_i\}$ is a generalized N-person non-cooperative game where $N$ is the set of players, $\mathcal{X}_i(x_{-i})$ is the feasible strategy set of player $i$ and $f_i$ is the payoff of player $i$. $x^*$ is called the GNE \cite{Facchinei2007} of $\Theta^{\rm G}$ if for any $i\in N$, 
\begin{eqnarray}
\notag
x_i^*\in \mathcal{X}_i(x_{-i}^*)
\end{eqnarray}
and 
\begin{eqnarray}
\notag
f_i(x_i^*,x_{-i}^*)=\max_{x_i}f_i(x_i,x_{-i}^*)\ {\rm s.t.}\ x_i\in \mathcal{X}_i(x_{-i}^*).
\end{eqnarray}

\noindent$\cdot$\textit{\textbf{{Multi-objective Game}:}} A multi-objective game refers to the game with vector payoffs. $\Theta^{\rm MO}\triangleq\{N,X_{i},f_i\}$ is a multi-objective N-person non-cooperative game where $f_i\triangleq\{f_i^1,\ldots,f_i^{k_i}\}:X\rightarrow \mathbb{R}^{k_i}$ is the vector-valued payoff function of player $i$. $x^*$ is called the weak PNE \cite{YuJ:Thestudy} of $\Theta^{\rm MO}$ if for any $i\in N$ and $x_{i}\in X_{i}$,
\begin{eqnarray}
\notag
f_i(x_i,x_{-i}^*)-f_i(x_i^*,x_{-i}^*)\notin {\rm int}\mathbb{R}_{+}^{k_i},
\end{eqnarray}
which implies that there exists $k\in \{1,...,k_i\}$ such that
\begin{eqnarray}
\notag
f_i^k(x_i,x_{-i}^*)-f_i^k(x_i^*,x_{-i}^*)\le 0.
\end{eqnarray}
$x^*$ is the PNE of $\Theta^{\rm MO}$ if for any $i\in N$ and $x_{i}\in X_{i}$,
\begin{eqnarray}
	\notag
	f_i(x_i,x_{-i}^*)-f_i(x_i^*,x_{-i}^*)\notin\mathbb{R}_{+}^{k_i}\backslash\{0\},
\end{eqnarray}
which implies that either
\begin{eqnarray}
\notag
f_i(x_i,x_{-i}^*)-f_i(x_i^*,x_{-i}^*)=0,
\end{eqnarray}
or there exists $k\in \{1,...,k_i\}$ such that
\begin{eqnarray}
\notag
f_i^k(x_i,x_{-i}^*)-f_i^k(x_i^*,x_{-i}^*)< 0.
\end{eqnarray}

\noindent$\cdot$\textit{\textbf{{Uncertain Game}:}} $\Theta^{\rm U}\triangleq\{N,X_{i},f_i,W\}$ is an uncertain N-person non-cooperative game with undeterministic factor $w$ affecting the payoffs to players and $W$ is the domain that $w$ can vary within. $f_i:X\times W\rightarrow \mathbb{R}^1$ is the payoff to player $i$. Let $f\triangleq(f_1,\ldots,f_n)^{\mathsf{T}}$ be the collection of all players' payoffs. $(x^*,w^*)$ is the NS-equilibrium \cite{Zhukovskii} of $\Theta^{\rm U}$ if (i) Given $w^*$, $x^*$ is the NE; and (ii) Given $x^*$, for any $w\in W$, 
\begin{eqnarray}
\notag
f(x^*,w^*)-f(x^*,w)\notin {\rm int}\mathbb{R}_{+}^{n},
\end{eqnarray}
which implies that there exists $i\in N$ such that
\begin{eqnarray}
\notag
f_i(x^*,w^*)- f_i(x^*,w)\le 0.
\end{eqnarray}

\noindent$\cdot$\textit{\textbf{(Two-Level) Hierarchical Game:}}
Two-level hierarchical games, dealing with two-stage sequential decision problems involving multiple leaders and followers, can be denoted by $\Theta^{\rm H}\triangleq\{N,X_i,f_i,M,\mathcal{Y}_j,\phi_j\}$
where $N$ is the set of $n$ leaders and $M$ is the set of $m$ followers. $f_i:X\times Y\rightarrow \mathbb{R}^1$ is the payoff of leader $i$ and $\phi_j:X\times Y\rightarrow \mathbb{R}^1$ is the payoff of follower $j$. $\mathcal{Y}_j:X\times Y_{-j}\rightrightarrows Y_{j}$ is the feasible strategy set of follower $j$. Each follower intends to maximize his payoff $\phi_j$ by solving
\begin{eqnarray}
\notag
\begin{split}
\mathcal{G}_j(x,y_{-j})=\arg &\max_{y_j}\phi_j(x,y_{j},y_{-j})\\
&\ {\rm s.t.}\ y_j\in \mathcal{Y}_j(x,y_{-j}).
\end{split}
\end{eqnarray}
Denote by $\mathcal{G}(x)\triangleq\{y|y_j\in\mathcal{G}_j(x,y_{-j}),\forall j\in M\}$
the \textit{reaction map} of all followers. $x^*$ is the N-S-N equilibrium of $\Theta^{\rm H}$ \cite{2005Quasi} if for any $i\in N$, there exists $y^{[i]*}$ with the same dimension of $y$ such that
\begin{eqnarray}
\setlength{\abovedisplayskip}{0cm}
\setlength{\belowdisplayskip}{0cm}
\notag
\begin{split}
(x_i^*,y^{[i]*})\in\arg &\max_{x_i,y} f_i(x_i,x^*_{-i},y)\\
&\ {\rm s.t.}\ x_i\in X_i,y\in \mathcal{G}(x_i,x^*_{-i}).
\end{split}
\end{eqnarray}
By this definition, no leader can increase his payoff by a unilateral change of his strategy, thus $x^*$ is also the NE of the $n$ leaders. Given $x^*$, for any $i\in N$, $y^{[i]*}\in \mathcal{G}(x^*)$ is the GNE of the $m$ followers, indicating that no follower can promote his payoff by a unilateral change of his strategy.

\subsection{Key Lemmas}
Next, several key lemmas are introduced to develop the theoretical results in this paper. First, we introduce the concepts of \emph{marginal function} and \emph{marginal map}, as well as their continuity property, in the following lemma.

\begin{lem}[Continuity of marginal function and marginal map \cite{Aubin:Differential}]
	\label{lemma:marginal}
	Suppose $\mathcal{G}:Y\rightrightarrows X$ is a set-valued map and $f:X\times Y\rightarrow \mathbb{R}^1$ is a real-valued function defined on $X\times Y$. If $f$ is continuous on $X\times Y$ and $\mathcal{G}$ is continuous with compact values, then 
	
	(i) The marginal function
	\begin{eqnarray}
	\notag
	M(y)\triangleq\sup\nolimits_{x}f(x,y)\ {\rm s.t.}\ x\in \mathcal{G}(y)
	\end{eqnarray}
	is continuous with respect to $y$; 
	
	(ii) The marginal set-valued map 
	\begin{eqnarray}
	\notag
	\mathcal{M}(y)\triangleq\{x\in \mathcal{G}(y)|M(y)=f(x,y)\}
	\end{eqnarray}
	which denotes the solutions to the maximization problem $M(y)$, is upper semi-continuous with respect to $y$.
\end{lem}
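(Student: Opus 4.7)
The statement is the classical Berge Maximum Theorem in the form given by Aubin, so my plan is to prove (i) by establishing upper and lower semicontinuity of the scalar marginal function $M$ separately, and then deduce (ii) using (i) together with the closedness of $\text{graph}\,\mathcal{G}$ that is implied by upper semicontinuity with compact values. Because $\mathcal{G}$ is continuous it is both upper and lower semicontinuous, so at each step I can either (a) extract a convergent cluster point from any sequence $x^k\in\mathcal{G}(y^k)$ with $y^k\to y$, using upper semicontinuity plus compact-valuedness, or (b) approximate any chosen $x\in\mathcal{G}(y)$ by a sequence $x^k\in\mathcal{G}(y^k)$, using lower semicontinuity. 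These two dual tools drive everything that follows.

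For upper semicontinuity of $M$, I fix any $y^k\to y$ and pick maximizers $x^k\in\mathcal{M}(y^k)$, which exist because $\mathcal{G}(y^k)$ is compact and $f$ is continuous (Weierstrass). Upper semicontinuity of $\mathcal{G}$ at $y$ with compact images forces $\{x^k\}$ to lie eventually in a compact neighborhood of $\mathcal{G}(y)$, so a subsequence $x^{k_j}\to \hat{x}$, and closedness of $\text{graph}\,\mathcal{G}$ yields $\hat{x}\in\mathcal{G}(y)$. Continuity of $f$ then gives $M(y^{k_j})=f(x^{k_j},y^{k_j})\to f(\hat{x},y)\le M(y)$, hence $\limsup_k M(y^k)\le M(y)$. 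For lower semicontinuity of $M$, I pick any $x^*\in\mathcal{M}(y)$; for each $y^k\to y$, lower semicontinuity of $\mathcal{G}$ produces a selection $x^k\in\mathcal{G}(y^k)$ with $x^k\to x^*$, and continuity of $f$ yields $M(y^k)\ge f(x^k,y^k)\to f(x^*,y)=M(y)$, i.e.\ $\liminf_k M(y^k)\ge M(y)$. Combining the two inequalities establishes (i).

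For (ii), I verify the sequential upper semicontinuity of $\mathcal{M}$: given $y^k\to y$ and $x^k\in\mathcal{M}(y^k)$, the same compact-neighborhood argument extracts a subsequence $x^{k_j}\to\hat{x}\in\mathcal{G}(y)$; part (i) supplies $M(y^{k_j})\to M(y)$, while continuity of $f$ supplies $f(x^{k_j},y^{k_j})\to f(\hat{x},y)$. Since $f(x^{k_j},y^{k_j})=M(y^{k_j})$, passing to the limit forces $f(\hat{x},y)=M(y)$, so $\hat{x}\in\mathcal{M}(y)$, which is the sequential criterion for upper semicontinuity (valid because a limit set-valued map with compact values is automatically closed-valued). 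The main technical delicacy I anticipate is precisely the extraction of a convergent subsequence from $\{x^k\}$: strictly speaking this relies on the standard fact that an upper semicontinuous set-valued map with compact values is locally bounded, which is slightly beyond the bare definition of upper semicontinuity and is where the compact-valuedness hypothesis is genuinely used. Once that fact is imported from the cited reference, the remaining steps are routine limit passages in continuous functions.
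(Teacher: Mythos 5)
Your argument is correct in substance, but note that the paper itself offers no proof of this lemma: it is imported verbatim from Aubin's \emph{Differential Inclusions} (the reference \cite{Aubin:Differential} attached to the statement), and is then used as a black box in the proofs of Lemmas \ref{lemma:Hi} and \ref{lemma:H0}. What you have written is the standard proof of Berge's maximum theorem, which is presumably the same argument as in the cited source, so there is no genuinely different route to compare -- you have simply made explicit what the paper delegates to the literature. Two routine points are worth tightening if you write this out in full. First, in the upper-semicontinuity half of (i), the extraction of $x^{k_j}\to\hat x$ only bounds the limit of $M$ along that particular subsequence; to get $\limsup_k M(y^k)\le M(y)$ you should start from a subsequence along which $M(y^k)$ converges to the $\limsup$ and then extract the convergent maximizers from it. Second, in (ii) the sequential argument establishes that $\mathcal{M}$ has closed graph; the passage from closed graph to upper semicontinuity needs the local boundedness of $\mathcal{M}$, which you correctly obtain from the fact that $\mathcal{M}(y)\subseteq\mathcal{G}(y)$ and $\mathcal{G}$ is upper semicontinuous with compact values -- your parenthetical justification (``a limit set-valued map with compact values is automatically closed-valued'') is not the right statement of this fact, but you do identify the correct underlying mechanism elsewhere in the paragraph, so this is a matter of phrasing rather than a gap.
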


Then, Kakutani's fixed-point theorem, which is often applied to prove the existence of NE, is given below.
\begin{lem}[Kakutani's fixed-point theorem \cite{Kakutani:AGeneralization}]
	\label{kakutani}
	Assume $X$ be a non-empty, compact and convex subset of Euclidean space $\mathbb{R}^n$. $\mathcal{F}:X\rightrightarrows X$ is an upper semi-continuous set-valued map on $X$ with compact convex values. Then  there exists $x^*\in X$ such that $x^*\in\mathcal{F}(x^*)$.
\end{lem}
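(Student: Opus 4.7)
The plan is to derive Kakutani's theorem from Brouwer's fixed-point theorem by building a sequence of continuous single-valued approximations to $\mathcal{F}$, applying Brouwer to each, and then passing to the limit using the upper semi-continuity and convex-valuedness of $\mathcal{F}$. This is the classical route, and it is the one I would follow because it reduces the set-valued problem to the well-understood single-valued case while making minimal demands beyond the hypotheses actually stated in the lemma.

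First I would fix, for each integer $k\ge 1$, a simplicial triangulation $T_k$ of $X$ with mesh no larger than $1/k$. At every vertex $v$ of $T_k$ I would select a single point $g_k(v)\in\mathcal{F}(v)$, which is possible because $\mathcal{F}(v)$ is non-empty. Then I would extend $g_k$ to all of $X$ by affine interpolation on each simplex: if $x=\sum_i\lambda_i v_i$ lies in a simplex with vertices $v_0,\ldots,v_n$ and barycentric coordinates $\lambda_i\ge 0$, $\sum_i\lambda_i=1$, set $g_k(x)=\sum_i\lambda_i g_k(v_i)$. Since $X$ is convex and each $g_k(v_i)\in X$, the map $g_k:X\to X$ is well-defined and continuous. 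By Brouwer's fixed-point theorem applied on $X$, there exists $x_k\in X$ with $x_k=g_k(x_k)$.

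Next I would extract a convergent subsequence $x_k\to x^*\in X$, which exists by compactness of $X$. For each such $x_k$, write $x_k=\sum_{i=0}^n\lambda_k^i v_k^i$ with $v_k^i$ the vertices of the simplex containing $x_k$ and $\lambda_k^i\ge 0$, $\sum_i\lambda_k^i=1$. Because the mesh of $T_k$ tends to zero, all the vertices $v_k^i$ also converge to $x^*$. Passing to a further subsequence, I may assume $\lambda_k^i\to\lambda^{*i}$ and, using compactness of $X$ again, $g_k(v_k^i)\to y^{*i}\in X$ for each $i$. The identity $x_k=\sum_i\lambda_k^i g_k(v_k^i)$ then yields, in the limit, $x^*=\sum_i\lambda^{*i}y^{*i}$ with $\sum_i\lambda^{*i}=1$ and $\lambda^{*i}\ge 0$.

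The main obstacle, and the crux of the proof, is to show that each $y^{*i}$ belongs to $\mathcal{F}(x^*)$, after which convexity of $\mathcal{F}(x^*)$ immediately gives $x^*\in\mathcal{F}(x^*)$ and the proof is complete. Here I would invoke the upper semi-continuity of $\mathcal{F}$ together with its compact (hence closed) values: since $v_k^i\to x^*$ and $g_k(v_k^i)\in\mathcal{F}(v_k^i)$ with $g_k(v_k^i)\to y^{*i}$, the standard sequential characterization of an upper semi-continuous set-valued map with closed values forces $y^{*i}\in\mathcal{F}(x^*)$. The delicate point is precisely this step, because without closed values the limit of the $g_k(v_k^i)$ need not land inside $\mathcal{F}(x^*)$; the hypothesis that $\mathcal{F}$ has compact values is exactly what I would lean on to close the argument.
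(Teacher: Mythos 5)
The paper never proves this lemma: it is quoted as a known result from Kakutani's original article and used purely as a black box in the proof of Theorem \ref{equilibrium_existence}, so there is no internal argument to compare yours against. What you give is the standard classical derivation from Brouwer's fixed-point theorem, and its logic is sound: non-emptiness of the values lets you choose selections at the vertices, convexity of $X$ keeps the piecewise-affine interpolants $g_k$ mapping $X$ into $X$, Brouwer yields approximate fixed points $x_k$, compactness extracts convergent subsequences of the $x_k$, the vertices and the values $g_k(v_k^i)$, and the key limiting step is handled correctly — upper semi-continuity together with compact (hence closed) values gives exactly the sequential closed-graph property needed to conclude $y^{*i}\in\mathcal{F}(x^*)$, after which convexity of $\mathcal{F}(x^*)$ yields $x^*\in\mathcal{F}(x^*)$. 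This buys the paper nothing it needs (the citation suffices there), but as a self-contained proof it is essentially the canonical one.

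One technical point you should patch: a general non-empty compact convex subset $X\subset\mathbb{R}^n$ (for instance a Euclidean ball) admits no finite simplicial triangulation, so ``fix a triangulation $T_k$ of $X$'' is only literally meaningful when $X$ is a polytope or simplex. The standard repair is to first prove the statement for a simplex (or polytope) $\Delta\supseteq X$, then apply it to the composed map $\mathcal{F}\circ r$, where $r:\Delta\to X$ is the continuous nearest-point retraction onto the convex set $X$; since $\mathcal{F}\circ r$ is upper semi-continuous with compact convex values and its values lie in $X$, any fixed point $x^*\in(\mathcal{F}\circ r)(x^*)$ satisfies $x^*\in X$, hence $r(x^*)=x^*$ and $x^*\in\mathcal{F}(x^*)$. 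With that adjustment your argument is complete.
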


\section{N-S-N Game under DDUs and Its Equilibrium}
\label{NSNDDUS}
\subsection{Game Model}

In this subsection, we establish the model of Nash-Stackelberg-Nash games under DDUs. Four key features of this class of games are taken into account:

(\romannumeral1) There exist $n$ leaders and $m$ followers, as the players of the game. Each player has his own pending strategy and payoff function, and has full authority to act individually to maximize his payoff. The payoff of a player is contingent on not only his own strategy, but also strategies of the other players in the game. Coalitions and re-distribution of payoffs are not allowed, indicating the scope of non-cooperative games. 

(\romannumeral2) A two-stage decision process is involved where the leaders move observably first and then the followers act sequentially in response to the leaders' actions. The leaders also know beforehand that the followers would observe their actions.

(\romannumeral3) There exists an uncertain factor in the game, the value of which is ambiguous during the first-stage (when the leaders move) and is revealed at the beginning of the second stage (prior to the followers' move). The uncertain parameter has an effect on the payoffs of both leaders and followers, as well as the strategy sets of followers.

(\romannumeral4) The uncertain factor is decision-dependent but of complete ignorance knowledge. Specifically, leaders can exercise proactive control to limit the range that the uncertain factor varies within. Apart from this, leaders have no reliable information about the uncertain factor, such as the probability distribution.

A diagram of considered game is presented in Figure \ref{nsn}. With the above settings, the normal-form of the game is given as follows:
\begin{figure}[!htp]
	\centering
	\includegraphics[width=0.45\textwidth]{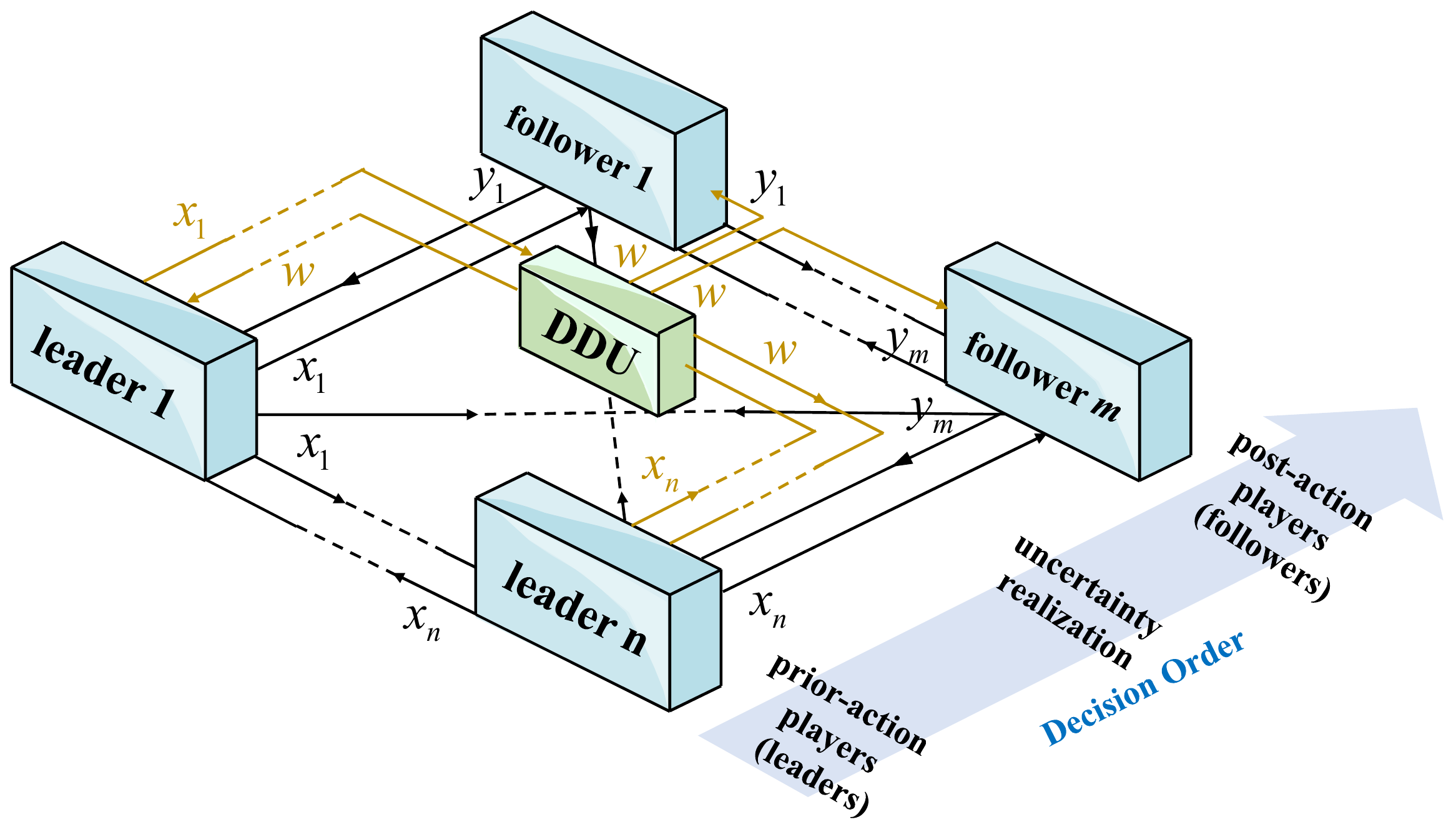}
	\caption{The diagram of game $\Theta^{\rm DDU}$.}
	\label{nsn}
\end{figure}

\begin{defn}
	\label{def:model}
	A Nash-Stackelberg-Nash game under DDUs is defined as an eight-tuple
	\begin{eqnarray}
	\label{def_model}
	\Theta^{\rm DDU}\triangleq\{N,X_i,\mathcal{X}_i,Y,W,\mathcal{W},f_i,\mathcal{G}\}
	\end{eqnarray}
	the elements involved in which are explained as below.
		
    \noindent $\cdot$ \textbf{Leaders:}  $N\triangleq\{1,\ldots,n\}$ is the set of the $n$ leaders. For any $i\in N$, $x_i$ is the strategy of leader $i$ and $X_i$ is the strategy space of $x_i$. Considering shared constraints among leaders, the feasible strategy set of leader $i$ is denoted by $\mathcal{X}_i:X_{-i}\rightrightarrows X_i$. Single-valued function $f_i:X\times Y\times W\rightarrow\mathbb{R}^1$ is the payoff of leader $i$ that he or she wishes to maximize. The vector-valued function $f\in\mathbb{R}^n$ consisting of the payoffs of all leaders is $f\triangleq(f_1,\ldots,f_n)^{\mathsf{T}}$.

	\noindent$\cdot$ \textbf{Uncertainties:} $w$ denotes the collection of all uncertain factors and $W$ is the uncertainty set that contains all possible realizations of $w$. Furthermore, set-valued map $\mathcal{W}:X\rightrightarrows W$ is utilized to characterize the dependency of $w$ on decision $x$. Once the strategies of leaders are determined, the realization of $w$ must lie within $\mathcal{W}(x)$.
		
	\noindent$\cdot$ \textbf{Followers:} $y$ collectively denotes strategies of the followers and $Y$ is the corresponding strategy space. The followers' problem is modeled as a generalized Nash game, which is parameterized by the leaders' strategies $x$ and the revealed value of $w$. Let $\mathcal{G}: X\times W \rightrightarrows Y$ denote the set of GNE of the followers. $\mathcal{G}(x,w)$ is also the reaction map of the followers to which $(x,w)$ are exogenous. Note that $\mathcal{G}(x,w)$ is not necessarily a singleton.
		
	\noindent$\cdot$ \textbf{Decision Order:} The involved two groups of the players in the game and their action sequence are: the leaders first and then the followers. Post-action players are in view of the moves of prior-action players; and the prior-action players know ex-ante that the post-action players observe their actions and are able to anticipate the response of the post-action players. The uncertainty is realized after the decision of leaders and before the actions of followers. 
\end{defn}

Game $\Theta^{\rm DDU}$ is assumed to be under complete information, i.e., the description of the game, regarding the set of players, the strategy sets, and the utility functions, is common knowledge among the players. 

Game $\Theta^{\rm DDU}$ is an extended form of non-cooperative games by combining the characteristics of game $\Theta^{\rm G}$, $\Theta^{\rm H}$, and $\Theta^{\rm U}$. Analogous to game $\Theta^{\rm G}$, the strategy of one player belongs to the set that explicitly depends on the strategies of the other players. Analogous to game $\Theta^{\rm H}$, players are in a position of the sequentially different levels. Analogous to game $\Theta^{\rm U}$, undetermined parameters are involved in game $\Theta^{\rm DDU}$ in the case of complete ignorance, i.e., the players in the game are only aware of the domain where the uncertain parameters vary. In particular, the concept of DDUs is introduced to game $\Theta^{\rm DDU}$ so that the players plagued by uncertainties in turn have an effect on the domain within which the uncertain parameter varies. Regarding the DDU parameter $w$, we have the following remark.

\begin{rem}[The DDUs] (i) The decision-dependency of $w$ is characterized by the decision-dependent uncertainty set $\mathcal{W}(x)$ without an explicit form. This generic set-valued model covers distinguishable formulations of DDU set in existing works\cite{zhang2017,Lappas2018,Nohadani2018,mypaper2021}. Also, $\mathcal{W}(x)$ applies readily with DIU sets by setting $\mathcal{W}(x)=W$ for any $x\in X$. (ii) In game $\Theta^{\rm DDU}$, all the uncertain factors are collectively denoted by $w$ and $w$ is a shared parameter for all leaders. It readily applies to the case that each leader has his own local uncertain factor. 
\end{rem}

\subsection{Equilibrium}
In this subsection, we define the variant of NE for the game $\Theta^{\rm DDU}$ and explain its rationality. First, we introduce Condition \ref{cond:equilibrium} which is necessary for defining the equilibrium of $\Theta^{\rm DDU}$. Then the equilibrium of $\Theta^{\rm DDU}$ is given in Definition \ref{def:eq}.

\begin{cond}
\label{cond:equilibrium}
Consider a point $(x^*,w^*)\in X\times W$.

(a) For any $i\in N,x_i^*\in\mathcal{X}_i(x_{-i}^*)$;

(b) $w^*\in \mathcal{W}(x^*)$; 

(c) For any $i\in N$, there exists $y^{[i]*}\in \mathcal{G}(x^*,w^*)$ such that
\begin{eqnarray}
\notag
f_i(x^*,y^{[i]*},w^*)\ge f_i((x_i,x_{-i}^*),y,w^*)
\end{eqnarray}
holds for any $x_i\in \mathcal{X}_i(x_{-i}^*)$ and $y\in \mathcal{G}((x_i,x_{-i}^*),w^*)$;

(d1) There exists $y^{[w]*}\in \mathcal{G}(x^*,w^*)$ such that
\begin{eqnarray}
\notag
f(x^*,y^{[w]*},w^*)-f(x^*,y,w)\notin {\rm int }\mathbb{R}^n_+
\end{eqnarray}
holds for any $w\in \mathcal{W}(x^*)$ and $y\in \mathcal{G}(x^*,w)$;

(d2) There exists $y^{[w]*}\in \mathcal{G}(x^*,w^*)$ such that
\begin{eqnarray}
	\notag
	f(x^*,y^{[w]*},w^*)-f(x^*,y,w)\notin \mathbb{R}^n_+\backslash\{0\}
\end{eqnarray}
holds for any $w\in \mathcal{W}(x^*)$ and $y\in \mathcal{G}(x^*,w)$.
\end{cond}

\begin{defn}[Equilibrium of Game $\Theta^{\rm DDU}$]
\label{def:eq} Consider a point  $(x^*,w^*)\in X\times W$.

(a) If $(x^*,w^*)$ satisfies Condition \ref{cond:equilibrium} (a)-(c) and (d1), then $(x^*,w^*,\mathbf{y}^*)$ is a weak equilibrium point of game $\Theta^{\rm DDU}$ where
\begin{eqnarray}
\notag
\mathbf{y}^*\triangleq(y^{[1]*},...,y^{[n]*},y^{[w]*})
\end{eqnarray}
is the collection of anticipations on followers' reactions originating from Condition \ref{cond:equilibrium} (c) and (d1).

(b) If $(x^*,w^*)$ satisfies Condition \ref{cond:equilibrium} (a)-(c) and (d2), then $(x^*,w^*,\mathbf{y}^*)$ is a strong equilibrium point of game $\Theta^{\rm DDU}$ where
\begin{eqnarray}
	\notag
	\mathbf{y}^*\triangleq(y^{[1]*},...,y^{[n]*},y^{[w]*})
\end{eqnarray}
is the collection of anticipations on followers' reactions originating from Condition \ref{cond:equilibrium} (c) and (d2).
\end{defn}

The rationality of the equilibrium of $\Theta^{\rm DDU}$ in Definition \ref{def:eq} is explained in the following remark.

\begin{rem}[Rationality of the Equilibrium]
	\textcolor{white}{text}
	
	\noindent$\cdot$ \textbf{(i) Feasibility:} Condition \ref{cond:equilibrium} (a) and (b) ensure the feasibility of $x^*$ and $w^*$, respectively. Given the leaders' strategy $x^*$, the realization of uncertainty $w^*$ must stay within $\mathcal{W}(x^*)$. This condition is consistent with the feasibility requirement on the GNE of game $\Theta^{\rm G}$. 
	
	\noindent$\cdot$ \textbf{(ii) Dominance:} Condition \ref{cond:equilibrium} (c) emphasizes that given the realization of uncertainty $w^*$, no leader can gain better profit by unilaterally adjusting his own strategy. Thus,  equilibrium strategies of all leaders, $x^*$, form a GNE that dominates other decisions in $X$. On the other hand, given $x^*$ and $w^*$, the GNE of the follower-level non-cooperative game is characterized by $\mathcal{G}(x^*,w^*)$. Thus, if the uncertain parameter is fixed as $w^*$,  $(x^*,$ $\mathcal{G}(x^*,w^*))$ is the GNE of the nominal two-level hierarchical game $\Theta^{\rm H}$.
	
	% Aghassi2006,Hayashi2005,Nishimura2009,Nishimura2012,2017Two
	% 2013EXISTENCE,2015MULTI
    \noindent$\cdot$ \textbf{(iii) Robustness:} Motivated by the concept of RNE, the best response of leaders under uncertainties is stipulated as the strategy with the best ``worst-case performance". The uncertain parameter $w$ is assumed to be a virtual player who intends to worsen the payoffs of all leaders simultaneously as much as possible. Distinguished from the classic RNE in \cite{Aghassi2006} where the worst-case performance is taken with respect to each player individually, we follow the insights in the NS-equilibrium \cite{Zhukovskii} of game $\Theta^{\rm U}$ to characterize the collective interest of the leaders against the uncertainty. Thus the virtual player is stipulated to be on the same footing with the leaders and perform a zero-sum game with the leaders.
    
    \noindent$\cdot$ \textbf{(iv) Pareto Axiom:} Motivated by the weak and non-weak PNE of multi-objective game $\Theta^{\rm MO}$, the virtual player's best response is stipulated with the aid of the weak and non-weak Pareto optimality axiom. Given the leaders' strategies $x^*$ and the virtual player's strategy $w^*$, Condition \ref{cond:equilibrium} (d1) states that the virtual player cannot reduce the payoffs of \textit{all} the leaders simultaneously by unilaterally adjusting his own strategy; and Condition \ref{cond:equilibrium} (d2) states that the virtual player cannot reduce the payoff of \textit{any} leader without improving the payoff of any other leader by unilaterally adjusting his own strategy. Condition \ref{cond:equilibrium} (b) and (d1) together imply that $w^*$ is the weak Pareto efficient solution \cite{YuJ:Thestudy} to the following multi-objective optimization problem. 	
    \begin{eqnarray}
    	\label{def:moo}
    	\begin{split}	
    	&\min_{w,y}\ \left[f_1(x^*,y,w),...,f_n(x^*,y,w)
    	\right]^{\mathsf T} \\
    	&{\rm s.t.}\ w\in\mathcal{W}(x^*),y\in\mathcal{G}(x^*,w)
        \end{split}
    \end{eqnarray}
   And Condition \ref{cond:equilibrium} (b) and (d2) together imply that $w^*$ is the Pareto efficient solution to problem \eqref{def:moo}.
\end{rem}
Also note that, in Definition \ref{def:eq}, the followers' equilibrium strategies, $y^{[1]*},...,y^{[n]*}$ and $y^{[w]*}$, while all being elements of the equilibrium response set $\mathcal{G}(x^*,w^*)$, are not required to be identical, providing additional modeling flexibility \cite{2005Quasi}. This is because $y^{[i]*}$ is leader $i$'s anticipation of the followers' response to $(x^*,w^*)$ and $y^{[w]*}$ is the virtual player's anticipation of the followers' response to $(x^*,w^*)$. These anticipations are not necessarily identical when $\mathcal{G}(x,w)$ is not a singleton. Thus, an equilibrium of $\Theta^{\rm DDU}$, as given in Definition \ref{def:eq}, actually contains $n+1$ possible outcomes which are $(x^*,w^*,y^{[1]*}),...,(x^*,w^*,y^{[n]*})$ and $(x^*,w^*,y^{[w]*})$. Such an equilibrium can be reduced to a unique outcome in the following two cases:

\noindent$\cdot$ \textit{Case 1:} $\mathcal{G}$ is a single-valued map. One could define a variation of the follower's problem by stipulating certain schemes or rules, say, market clearing mechanism, to enforce $y^{[1]*}=\ldots=y^{[n]*}=y^{[w]*}$.

\noindent$\cdot$ \textit{Case 2:} Given $(x^*,w^*)$, there exists $y^*\in\mathcal{G}(x^*,w^*)$ such that:

(i) For any $i\in N$ and $x_i\in \mathcal{X}_i(x_{-i}^*)$, and for any $y\in \mathcal{G}((x_i,x_{-i}^*),w^*)$, there is
\begin{eqnarray}
	\notag
	f_i(x^*,y^{*},w^*)\ge f_i((x_i,x_{-i}^*),y,w^*);
\end{eqnarray} 

(ii) For any $w\in \mathcal{W}(x^*)$ and $y\in \mathcal{G}(x^*,w)$, there is
\begin{eqnarray}
	\notag
	f(x^*,y^{*},w^*)-f(x^*,y,w)\notin {\rm int }\mathbb{R}^n_+\ (\text{or }\mathbb{R}^n_{+}\backslash\{0\}).
\end{eqnarray}
In Case 2, the equilibrium $(x^*,w^*,\mathbf{y}^*)$ reduces to $(x^*,w^*,y^*)$.

If neither of the above two cases is satisfied, Definition \ref{def:eq} provides a more generalized and flexible concept of equilibrium.

\section{Existence of the Equilibrium}
\label{Existence}
\subsection{Main Result}
We justify the existence of the equilibrium of game $\Theta^{\rm DDU}$ under the following assumptions.
\begin{assum}
	\label{asmp:SetValuedMap} The following conditions hold. 

	(a) For any $i\in N$, $\mathcal{X}_i:X_{-i}\rightrightarrows X_i$ is continuous set-valued map with non-empty compact convex values;
	
	(b) $\mathcal{G}:X\times W\rightrightarrows Y$ is continuous set-valued map with convex graph and non-empty compact values;
	
	(c) $\mathcal{W}:X\rightrightarrows W$ is continuous set-valued map with non-empty compact convex values.
	
\end{assum}

\begin{assum}
\label{asmp:fi} 
For any $i\in N$, $f_i:(\prod_{i\in N}X_i)\times Y\times W\rightarrow\mathbb{R}^1$ satisfies that

(a) For any $x_{-i}\in X_{-i}$ and $w\in W$, $f_i((x_i,x_{-i}),y,w)$ is quasi-concave with respect to $(x_i,y)$; 

(b) There exists a non-empty subset of $N$, namely, $S\subseteq N, S\neq \emptyset$, such that for any $x\in X$ and $i\in S$, $-f_i(x,y,w)$ is concave with respect to $(y,w)$;

(c) $f_i$ is continuous in $(\prod_{i\in N}X_i)\times Y\times W$.
\end{assum}

\begin{thm}
\label{equilibrium_existence}
Game $\Theta^{\rm DDU}=\{N,X_i,\mathcal{X}_i,Y,W,\mathcal{W},f_i,\mathcal{G}\}$ is an N-S-N game under DDUs where $X_i,Y$ and $W$ are non-empty compact convex sets. 

(a) If Assumptions \ref{asmp:SetValuedMap} and \ref{asmp:fi} hold, then there exists at least one weak equilibrium point of game $\Theta^{\rm DDU}$. 

(b) If Assumptions \ref{asmp:SetValuedMap} and \ref{asmp:fi} hold with $S=N$, then there exists at least one strong equilibrium point of game $\Theta^{\rm DDU}$.
\end{thm}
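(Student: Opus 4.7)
The plan is to recast both the leaders' optimistic best responses and the virtual player's Pareto-optimal response as argmax/argmin maps, bundle them into a single set-valued map on the compact convex set $X\times W$, and apply Kakutani's fixed-point theorem (Lemma \ref{kakutani}). Fix positive scalars $\lambda_i>0$ for $i\in S$ with $\sum_{i\in S}\lambda_i=1$, and define for each $i\in N$
\[
A_i(x_{-i},w)=\arg\max\{f_i((x_i,x_{-i}),y,w):x_i\in\mathcal{X}_i(x_{-i}),\,y\in\mathcal{G}((x_i,x_{-i}),w)\},
\]
and for the virtual player
\[
A_w(x)=\arg\min\bigl\{\textstyle\sum_{i\in S}\lambda_i f_i(x,y,w):w\in\mathcal{W}(x),\,y\in\mathcal{G}(x,w)\bigr\}.
\]
Let $F_i(x_{-i},w)$ and $F_w(x)$ be the projections of these argmax/argmin sets onto the $x_i$- and $w$-coordinates respectively, and set $\mathcal{F}(x,w)=\prod_{i\in N}F_i(x_{-i},w)\times F_w(x)$. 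A fixed point of $\mathcal{F}$ supplies $(x^*,w^*)$ along with anticipations $y^{[i]*},y^{[w]*}\in\mathcal{G}(x^*,w^*)$ extracted from the defining optima. By construction, Condition \ref{cond:equilibrium}(a)(b) are automatic and Condition \ref{cond:equilibrium}(c) is the leader-side argmax property; a standard positive-weight scalarization argument then upgrades $A_w$-membership of $(w^*,y^{[w]*})$ to weak Pareto optimality, giving Condition \ref{cond:equilibrium}(d1) and proving part (a). Under $S=N$ in part (b), strict positivity of all $\lambda_i$ upgrades the same scalarization to a certificate of Pareto efficiency, giving Condition \ref{cond:equilibrium}(d2).

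To invoke Kakutani I must check that $\mathcal{F}$ is upper semicontinuous with nonempty compact convex values. The bridge is provided by the auxiliary feasibility maps
\[
\mathcal{H}_i(x_{-i},w)=\{(x_i,y):x_i\in\mathcal{X}_i(x_{-i}),\,y\in\mathcal{G}((x_i,x_{-i}),w)\},
\]
\[
\mathcal{H}_w(x)=\{(w,y):w\in\mathcal{W}(x),\,y\in\mathcal{G}(x,w)\}.
\]
Using Assumption \ref{asmp:SetValuedMap}, I will argue that each is continuous with nonempty compact values: upper semi-continuity by a closed-graph argument on convergent subsequences, and lower semi-continuity by a diagonal-selection argument that chains the lower semi-continuity of $\mathcal{X}_i$ (resp.\ $\mathcal{W}$) and of $\mathcal{G}$. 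Convexity of their values follows from the convex graph of $\mathcal{G}$ together with convexity of $\mathcal{X}_i(x_{-i})$ and $\mathcal{W}(x)$. Lemma \ref{lemma:marginal} then delivers upper semi-continuity of $A_i$ and $A_w$, which is inherited by the projections $F_i$ and $F_w$ along with compactness. Convexity of $A_i$ follows from Assumption \ref{asmp:fi}(a): the argmax of a quasi-concave function over the convex set $\mathcal{H}_i(x^*_{-i},w^*)$ is a convex upper level set. Convexity of $A_w$ follows because $\sum_{i\in S}\lambda_i f_i$ is convex in $(y,w)$ by Assumption \ref{asmp:fi}(b), so its argmin on the convex set $\mathcal{H}_w(x^*)$ is convex. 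Projection preserves convexity, so the hypotheses of Lemma \ref{kakutani} are satisfied on the compact convex domain $X\times W$.

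The main obstacle I anticipate is the scalarization--Pareto bridge, complicated by the fact that Assumption \ref{asmp:fi}(b) only provides convexity of $f_i$ in $(y,w)$ for $i\in S$, a possibly strict subset of $N$. For part (a) a contradiction closes the gap: if $(w^*,y^{[w]*})$ failed (d1), some admissible $(w,y)$ would satisfy $f_i(x^*,y,w)<f_i(x^*,y^{[w]*},w^*)$ for every $i\in N$, hence for every $i\in S$, and the positively weighted sum would contradict optimality in $A_w$. Part (b) requires $S=N$ with all $\lambda_i>0$, so that a weak-plus-one-strict violation of (d2) over $N$ is simultaneously a violation over $S$, again contradicting optimality via the weighted sum. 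A secondary technical subtlety is the lower semi-continuity of $\mathcal{H}_i$: since the leader's own variable $x_i$ appears inside the argument of $\mathcal{G}$, a naive substitution fails and one must first select $x_i^k\in\mathcal{X}_i(x_{-i}^k)\to x_i$ from lower semi-continuity of $\mathcal{X}_i$, and only then select $y^k\in\mathcal{G}((x_i^k,x_{-i}^k),w^k)\to y$ from lower semi-continuity of $\mathcal{G}$ along that sequence.
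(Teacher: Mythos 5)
Your proposal is correct and takes essentially the same route as the paper's proof: leader-level argmax maps, a scalarized surrogate objective for the virtual player whose minimizers are automatically weakly Pareto optimal (Pareto optimal when $S=N$ with strictly positive weights), continuity/compactness/convexity of the feasibility and marginal maps via Lemma \ref{lemma:marginal} and the convex graph of $\mathcal{G}$, and Kakutani's theorem (Lemma \ref{kakutani}). The only cosmetic deviations are that you project out the follower anticipations and run the fixed-point argument on $X\times W$, recovering $y^{[i]*},y^{[w]*}$ afterward, whereas the paper keeps $\mathbf{y}$ as explicit coordinates of the fixed-point space, and you use a general positive-weight scalarization where the paper uses the unweighted sum over $S$.
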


Theorem \ref{equilibrium_existence} provides a sufficient condition to the equilibrium existence. Equilibrium existence of game $\Theta^{\rm DDU}$ is established on (i) continuity assumptions; (ii) compactness assumptions; and (iii) (quasi-) convexity assumptions, which is analogous to the existing results for the generalized Nash equilibrium problems \cite{Facchinei2007}. 

The following insights would be helpful: (i) Compactness and continuity assumptions are necessary; (ii) Since leaders try to maximize their payoffs while the uncertainty tries to make them worse off, $f_i(x,y,w)$ is stipulated to be quasi-concave in $x$ and (quasi-) convex in $w$; (iii) Due to the complexity of the leaders-DDUs-followers structure of the game model, existence of the equilibrium imposes more stringent requirements on the bottom-level, which is the follower-level non-cooperative game. $f_i$ has to be (quasi-) linear with respect to followers' strategy $y$. Moreover, followers' reaction map $\mathcal{G}$ is required to have convex graphs; and (iv) No special requirements are imposed on the feature of uncertainty, as long as the values of its feasible strategy set $\mathcal{W}$ are convex and compact.

\subsection{Proof of Theorem \ref{equilibrium_existence}}
\label{sketchproof}
In this subsection, we give a proof of Theorem \ref{equilibrium_existence} based on Kakutani's Fixed-Point Theorem. The main idea is to construct proper set-valued maps according to the definition of the equilibrium and to argue the existence of fixed points. Let Assumptions \ref{asmp:SetValuedMap}-\ref{asmp:fi} hold. We first present three important set-valued maps in the following definition, and then reveal their key properties.

\begin{defn}
Given the game $\Theta^{\rm DDU}$ in Definition \ref{def:model} and Assumption \ref{asmp:fi}, we define the following set-valued maps\footnote{In this definition, $u$, $v$ and $t$ are the variable counterparts of $x$, $y$ and $w$, respectively.}:

(a) For any $i$ in $N$, a set-valued map $\mathcal{H}_i:X_{-i}\times W\rightrightarrows X_i\times Y$ is defined as:
\begin{eqnarray}
\label{def:Hi}
\begin{split}
&\mathcal{H}_i(x_{-i},w)\triangleq\left\{x_i\in X_i,y^{[i]}\in Y:\right.\\
&(x_i,y^{[i]})\in\arg \max_{u_i,v} f_i((u_i,x_{-i}),v,w)\\
&\Big.{\rm s.t.}\ u_i\in \mathcal{X}_i(x_{-i}),v\in \mathcal{G}((u_i,x_{-i}),w)\Big\}.
\end{split}
\end{eqnarray}

(b) $\mathcal{H}_{\rm WPNE}:X\rightrightarrows W\times Y$ is defined on $X$ as
\begin{eqnarray}
\label{def:H0}
\begin{split}
&\mathcal{H}_{\rm WPNE}(x)\triangleq\left\{w\in \mathcal{W}(x),y^{[w]}\in \mathcal{G}(x,w):\right.\\
&\forall t\in\mathcal{W}(x),\forall v\in \mathcal{G}(x,t),\\
&\left.f(x,y^{[w]},w)-f(x,v,t)\notin{\rm int}\mathbb{R}^n_{+}\right\}.
\end{split}
\end{eqnarray}

$\mathcal{H}_{\rm PNE}:X\rightrightarrows W\times Y$ is defined on $X$ as
\begin{eqnarray}
	\label{def:HPNE}
	\begin{split}
		&\mathcal{H}_{\rm PNE}(x)\triangleq\left\{w\in \mathcal{W}(x),y^{[w]}\in \mathcal{G}(x,w):\right.\\
		&\forall t\in\mathcal{W}(x),\forall v\in \mathcal{G}(x,t),\\
		&\left.f(x,y^{[w]},w)-f(x,v,t)\notin\mathbb{R}^n_{+}\backslash\{0\}\right\}.
	\end{split}
\end{eqnarray}

$\mathcal{H}_S:X\rightrightarrows W\times Y$ is defined on $X$ as
\begin{eqnarray}
	\label{def:HS}
	\begin{split}
		\mathcal{H}_S(x)\triangleq\arg &\min_{t,v} \sum_{i\in S}f_i(x,v,t)\\
		&{\rm s.t.}\ t\in\mathcal{W}(x),v\in \mathcal{G}(x,t)
	\end{split}
\end{eqnarray}

(c) A set-valued map
\begin{eqnarray}
\notag
\mathcal{F}_{\rm WPNE}:X\times(\prod_{i=1}^{n+1}Y)\times W\rightrightarrows X\times(\prod_{i=1}^{n+1}Y)\times W
\end{eqnarray}
is defined as:
\begin{eqnarray}
\label{def:F}
\begin{split}
\mathcal{F}_{\rm WPNE}(x,&\mathbf{y},w)\triangleq\\
&\mathcal{H}_{\rm WPNE}(x,y^{[w]})\times\prod\nolimits_{i\in N}\mathcal{H}_i(x_{-i},y^{[i]},w)
\end{split}
\end{eqnarray}
where
\begin{eqnarray}
\notag
\mathbf{y}\triangleq(y^{[1]},...,y^{[n]},y^{[w]})
\end{eqnarray}
is the collection of anticipations on followers' reaction with $y^{[i]}\in Y,\forall i\in N$, and $y^{[w]}\in Y$. Note that in \eqref{def:F}, $\mathcal{H}_i$ and $\mathcal{H}_{\rm WPNE}$ are written as $\mathcal{H}_i(x_{-i},y^{[i]},w)$ and $\mathcal{H}_{\rm WPNE}(x,y^{[w]})$ just for ease of exposition. Actually the value of $\mathcal{H}_i$ does not depend on $y^{[i]}$ and the value of $\mathcal{H}_{\rm WPNE}$ has no relation with $y^{[w]}$, according to \eqref{def:Hi} and \eqref{def:H0}, respectively.

Similarly, a set-valued map
\begin{eqnarray}
	\notag
	\mathcal{F}_{\rm PNE}:X\times(\prod_{i=1}^{n+1}Y)\times W\rightrightarrows X\times(\prod_{i=1}^{n+1}Y)\times W
\end{eqnarray}
is defined as:
\begin{eqnarray}
	\label{def:FPNE}
	\begin{split}
		\mathcal{F}_{\rm PNE}(x,&\mathbf{y},w)\triangleq\\
		&\mathcal{H}_{\rm PNE}(x,y^{[w]})\times\prod\nolimits_{i\in N}\mathcal{H}_i(x_{-i},y^{[i]},w).
	\end{split}
\end{eqnarray}

A set-valued map 
\begin{eqnarray}
	\notag
	\mathcal{F}_S:X\times(\prod_{i=1}^{n+1}Y)\times W\rightrightarrows X\times(\prod_{i=1}^{n+1}Y)\times W
\end{eqnarray}
is defined as
\begin{eqnarray}
	\label{def:FS}
	\mathcal{F}_S(x,\mathbf{y},w)\triangleq\mathcal{H}_S(x,y^{[w]})\times\prod\nolimits_{i\in N}\mathcal{H}_i(x_{-i},y^{[i]},w).
\end{eqnarray}
\end{defn}

\begin{rem}\label{rem:dfn}
\textcolor{white}{xx}

(a) $\mathcal{H}_i(x_{-i},w)$ is the set of situations that satisfy Condition \ref{cond:equilibrium} (c) with $x_{-i}=x_{-i}^*$, $w=w^*$.

(b) $\mathcal{H}_{\rm WPNE}(x)$ is the set of situations that satisfy Condition \ref{cond:equilibrium} (d1) with $x=x^*$.

(c) $\mathcal{H}_{\rm PNE}(x)$ is the set of situations that satisfy Condition \ref{cond:equilibrium} (d2) with $x=x^*$.

(d) It is obvious that $\mathcal{H}_S(x)\subseteq \mathcal{H}_{\rm WPNE}(x)$ and $\mathcal{H}_S(x)|_{S=N}\subseteq \mathcal{H}_{\rm PNE}(x)$ hold for any $x\in X$.

(e) Based on Remark \ref{rem:dfn} (d), obviously, $\mathcal{F}_S(x)\subseteq \mathcal{F}_{\rm WPNE}(x)$ and $\mathcal{F}_S(x)|_{S=N}\subseteq \mathcal{F}_{\rm PNE}(x)$ hold for any $x\in X$. Thus a fixed point of $\mathcal{F}_S$ is also a fixed point of $\mathcal{F}_{\rm WPNE}$. A fixed point of $\mathcal{F}_S|_{S=N}$ is also a fixed point of $\mathcal{F}_{\rm PNE}$.
\end{rem}

The following lemma shows the relationship between the fixed point of set-valued map $\mathcal{F}_{\rm WPNE}$ and the weak equilibrium point of $\Theta^{\rm DDU}$, and the relationship between the fixed point of set-valued map $\mathcal{F}_{\rm PNE}$ and the strong equilibrium point of $\Theta^{\rm DDU}$. 
\begin{lem}
	\label{relationship}
	\textcolor{white}{xx}
	
	(a) Let $(x^*,w^*,\mathbf{y}^*)$ be a fixed point of $\mathcal{F}_{\rm WPNE}$, then $(x^*,w^*,\mathbf{y}^*)$ is a weak equilibrium of game $\Theta^{\rm DDU}$.
	
	(b) Let $(x^*,w^*,\mathbf{y}^*)$ be a fixed point of $\mathcal{F}_{\rm PNE}$, then $(x^*,w^*,\mathbf{y}^*)$ is a strong equilibrium of game $\Theta^{\rm DDU}$.
\end{lem}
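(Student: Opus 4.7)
The plan is a direct unpacking of the fixed-point condition into the clauses of Condition \ref{cond:equilibrium}. Fix $(x^*,\mathbf{y}^*,w^*)$ with $(x^*,\mathbf{y}^*,w^*)\in\mathcal{F}_{\rm WPNE}(x^*,\mathbf{y}^*,w^*)$. By the product structure in \eqref{def:F}, this single inclusion splits into two independent pieces: $(w^*,y^{[w]*})\in\mathcal{H}_{\rm WPNE}(x^*)$, and, for every $i\in N$, $(x_i^*,y^{[i]*})\in\mathcal{H}_i(x_{-i}^*,w^*)$. The remaining work is to verify that these two inclusions reproduce the four clauses of Condition \ref{cond:equilibrium}, exactly as already anticipated in Remark \ref{rem:dfn}(a)--(b).

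First I would read off the inner-feasibility and dominance requirements from the $\mathcal{H}_i$ inclusions. By \eqref{def:Hi}, $(x_i^*,y^{[i]*})\in\mathcal{H}_i(x_{-i}^*,w^*)$ forces $x_i^*\in\mathcal{X}_i(x_{-i}^*)$, which is Condition \ref{cond:equilibrium}(a), together with $y^{[i]*}\in\mathcal{G}(x^*,w^*)$ and the $\arg\max$ property
\begin{equation*}
f_i(x^*,y^{[i]*},w^*)\ge f_i((u_i,x_{-i}^*),v,w^*)
\end{equation*}
for all $u_i\in\mathcal{X}_i(x_{-i}^*)$ and $v\in\mathcal{G}((u_i,x_{-i}^*),w^*)$, which is precisely Condition \ref{cond:equilibrium}(c).

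Next I would extract the uncertainty-side conditions from the $\mathcal{H}_{\rm WPNE}$ inclusion. By \eqref{def:H0}, $(w^*,y^{[w]*})\in\mathcal{H}_{\rm WPNE}(x^*)$ yields $w^*\in\mathcal{W}(x^*)$, which is Condition \ref{cond:equilibrium}(b), together with $y^{[w]*}\in\mathcal{G}(x^*,w^*)$ and
\begin{equation*}
f(x^*,y^{[w]*},w^*)-f(x^*,v,t)\notin{\rm int}\,\mathbb{R}^n_+
\end{equation*}
for every $t\in\mathcal{W}(x^*)$ and $v\in\mathcal{G}(x^*,t)$, which is verbatim Condition \ref{cond:equilibrium}(d1). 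Combining the two steps, $(x^*,w^*,\mathbf{y}^*)$ satisfies Condition \ref{cond:equilibrium}(a)--(c) and (d1), so it is a weak equilibrium by Definition \ref{def:eq}(a), proving part (a). Part (b) then follows by rerunning the same argument with $\mathcal{F}_{\rm PNE}$ in place of $\mathcal{F}_{\rm WPNE}$: the $\mathcal{H}_i$ factors are unchanged, and the $\mathcal{H}_{\rm PNE}$ factor, via \eqref{def:HPNE}, delivers the strictly stronger non-dominance condition with $\mathbb{R}^n_+\backslash\{0\}$ in place of ${\rm int}\,\mathbb{R}^n_+$, which is Condition \ref{cond:equilibrium}(d2) and hence a strong equilibrium.

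There is no essential technical obstacle: the auxiliary maps $\mathcal{H}_i$, $\mathcal{H}_{\rm WPNE}$, $\mathcal{H}_{\rm PNE}$ were engineered so that a fixed point of the product map encodes exactly the four clauses of Condition \ref{cond:equilibrium}. The only item worth flagging is bookkeeping for the anticipations $y^{[1]*},\ldots,y^{[n]*},y^{[w]*}$ in $\mathbf{y}^*$: they need not coincide, which is why the domain of $\mathcal{F}_{\rm WPNE}$ is indexed over $n+1$ copies of $Y$ and why Definition \ref{def:eq} treats $\mathbf{y}^*$ as a tuple rather than a single vector. Assumptions \ref{asmp:SetValuedMap}-\ref{asmp:fi} play no role in this lemma; they are reserved for the subsequent fixed-point existence argument via Kakutani (Lemma \ref{kakutani}).
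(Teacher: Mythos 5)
Your proof is correct and follows essentially the same route as the paper, which simply notes that fixed points of $\mathcal{F}_{\rm WPNE}$ (resp.\ $\mathcal{F}_{\rm PNE}$) satisfy Condition \ref{cond:equilibrium} (a)--(c) and (d1) (resp.\ (d2)); your write-up just makes explicit the unpacking of the product map into the $\mathcal{H}_i$ and $\mathcal{H}_{\rm WPNE}$/$\mathcal{H}_{\rm PNE}$ inclusions. Your remarks that the anticipations $y^{[1]*},\ldots,y^{[n]*},y^{[w]*}$ need not coincide and that Assumptions \ref{asmp:SetValuedMap}--\ref{asmp:fi} are not needed for this lemma are both consistent with the paper.
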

Lemma \ref{relationship} can be proved by finding out that the fixed points of $\mathcal{F}_{\rm WPNE}$ satisfy Condition \ref{cond:equilibrium} (a)-(c) and (d1); and the fixed points of $\mathcal{F}_{\rm PNE}$ satisfy Condition \ref{cond:equilibrium} (a)-(c) and (d2). Lemma \ref{relationship} and Remark \ref{rem:dfn} (d) together transform the existence of the weak equilibrium of $\Theta^{\rm DDU}$ into the existence of a fixed point of $\mathcal{F}_{\rm S}$; and transform the existence of the strong equilibrium of $\Theta^{\rm DDU}$ into the existence of a fixed point of $\mathcal{F}_{ S}|_{S=N}$. Next, we give some preliminary results on $\mathcal{H}_i$, $\mathcal{H}_{S}$, and $\mathcal{F}_{S}$.

\begin{lem}
\label{lemma:Hi}
Let Assumptions \ref{asmp:SetValuedMap}-\ref{asmp:fi} hold.
For any $i\in N$, the set-valued map $\mathcal{H}_i$ defined in \eqref{def:Hi} has the following properties:

(a) $\mathcal{H}_i$ is non-empty in $X_{-i}\times W$;

(b) For any $x_{-i}\in X_{-i}$ and $w\in W$, $\mathcal{H}_i(x_{-i},w)$ is a compact set;

(c) For any $x_{-i}\in X_{-i}$ and $w\in W$, $\mathcal{H}_i(x_{-i},w)$ is a convex set;

(d) $\mathcal{H}_i(x_{-i},w)$ is upper semi-continuous with respect to $x_{-i}$ and $w$.
\end{lem}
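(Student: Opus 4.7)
The plan is to reduce parts (a), (b), and (d) to Lemma \ref{lemma:marginal} applied to a suitable joint feasibility correspondence, and to argue (c) directly from the quasi-concavity of $f_i$ together with the convex graph of $\mathcal{G}$.

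First, I would introduce the joint feasibility correspondence $\Phi_i:X_{-i}\times W\rightrightarrows X_i\times Y$ defined by
\begin{eqnarray}
\notag
\Phi_i(x_{-i},w)\triangleq\{(u_i,v):u_i\in\mathcal{X}_i(x_{-i}),\ v\in\mathcal{G}((u_i,x_{-i}),w)\},
\end{eqnarray}
so that $\mathcal{H}_i(x_{-i},w)$ coincides with the argmax set-valued map of the maximization of $f_i((u_i,x_{-i}),v,w)$ over $(u_i,v)\in\Phi_i(x_{-i},w)$. This unification is the key to invoking Lemma \ref{lemma:marginal}.

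The next step is to verify that $\Phi_i$ has non-empty compact values and is continuous. Non-emptiness and compact-valuedness follow immediately from Assumption \ref{asmp:SetValuedMap}(a)-(b) and the compactness of $X_i\times Y$. Upper semi-continuity of $\Phi_i$ is a closed-graph argument: if $(x_{-i}^k,w^k)\rightarrow(x_{-i},w)$ and $(u_i^k,v^k)\in\Phi_i(x_{-i}^k,w^k)$ with $(u_i^k,v^k)\rightarrow(u_i,v)$, closedness of the graphs of $\mathcal{X}_i$ and $\mathcal{G}$ forces $u_i\in\mathcal{X}_i(x_{-i})$ and $v\in\mathcal{G}((u_i,x_{-i}),w)$. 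For lower semi-continuity I would use a two-stage selection: lower semi-continuity of $\mathcal{X}_i$ first produces $u_i^k\in\mathcal{X}_i(x_{-i}^k)$ with $u_i^k\rightarrow u_i$; then, because $((u_i^k,x_{-i}^k),w^k)\rightarrow((u_i,x_{-i}),w)$ and $v\in\mathcal{G}((u_i,x_{-i}),w)$, lower semi-continuity of $\mathcal{G}$ supplies $v^k\in\mathcal{G}((u_i^k,x_{-i}^k),w^k)$ with $v^k\rightarrow v$.

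With $\Phi_i$ continuous with non-empty compact values and $f_i$ continuous by Assumption \ref{asmp:fi}(c), Lemma \ref{lemma:marginal} applies directly: the marginal map $\mathcal{H}_i$ is upper semi-continuous, yielding (d); and it is the non-empty, closed (hence compact, as it lies in the compact set $X_i\times Y$) argmax set, yielding (a) and (b). For (c), I would pick $(u_i^1,v^1),(u_i^2,v^2)\in\mathcal{H}_i(x_{-i},w)$ and $\lambda\in[0,1]$, and set $(u_i^\lambda,v^\lambda)=\lambda(u_i^1,v^1)+(1-\lambda)(u_i^2,v^2)$. Convexity of $\mathcal{X}_i(x_{-i})$ gives $u_i^\lambda\in\mathcal{X}_i(x_{-i})$; convexity of $\text{graph}\mathcal{G}$ applied to $((u_i^j,x_{-i}),w,v^j)$, $j=1,2$, gives $v^\lambda\in\mathcal{G}((u_i^\lambda,x_{-i}),w)$; and quasi-concavity of $f_i((u_i,x_{-i}),v,w)$ in $(u_i,v)$ from Assumption \ref{asmp:fi}(a) then ensures $f_i((u_i^\lambda,x_{-i}),v^\lambda,w)$ is no less than the common optimal value, so $(u_i^\lambda,v^\lambda)\in\mathcal{H}_i(x_{-i},w)$.

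The main obstacle I anticipate is the lower semi-continuity of $\Phi_i$: since $\mathcal{G}$ depends on the moving argument $(u_i,x_{-i})$, a one-shot selection will not work, and one must sequence the approximation, first moving $u_i^k$ inside $\mathcal{X}_i(x_{-i}^k)$ and only then performing a lower-semi-continuous selection inside $\mathcal{G}((u_i^k,x_{-i}^k),w^k)$. Once $\Phi_i$ is certified continuous with compact values, parts (a), (b), (d) are automatic from Lemma \ref{lemma:marginal}, and (c) requires only a direct check that the convex combination is both feasible (via convex-valued $\mathcal{X}_i$ and convex-graph $\mathcal{G}$) and optimal (via quasi-concavity of $f_i$).
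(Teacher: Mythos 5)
Your proposal is correct and follows essentially the same route as the paper: your correspondence $\Phi_i$ is exactly the paper's auxiliary map $\mathcal{K}_i$, whose continuity (two-stage lower semi-continuous selection plus a closed-graph argument) and non-empty compact values are established first, after which upper semi-continuity of $\mathcal{H}_i$ comes from the marginal-map lemma, non-emptiness and compactness from Weierstrass-type/closedness arguments, and convexity from the convex values of $\mathcal{X}_i$, the convex graph of $\mathcal{G}$, and quasi-concavity of $f_i$ in $(x_i,y)$. No gaps beyond the paper's own level of detail.
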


\begin{lem}
	\label{lemma:H0}
	Let Assumptions \ref{asmp:SetValuedMap}-\ref{asmp:fi} hold.
	The set-valued map $\mathcal{H}_S$ defined in \eqref{def:HS} has the following properties:

	(a) $\mathcal{H}_S$ is non-empty in $X$;
	
	(b) For any $x\in X$, $\mathcal{H}_S(x)$ is a convex set;
	
	(c) For any $x\in X$, $\mathcal{H}_S(x)$ is a compact set;
	
	(d) $\mathcal{H}_S(x)$ is upper semi-continuous with respect to $x$. 
\end{lem}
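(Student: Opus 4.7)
The plan is to derive all four properties by first establishing good structural properties for the composite feasible-set map
\[
\mathcal{C}(x) \triangleq \{(t,v) \in W \times Y : t \in \mathcal{W}(x),\ v \in \mathcal{G}(x,t)\},
\]
and then invoking Lemma \ref{lemma:marginal} for parts (a), (c), and (d), while deriving (b) separately from the convexity hypotheses in Assumptions \ref{asmp:SetValuedMap}--\ref{asmp:fi}.

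First, I would show that $\mathcal{C}$ is continuous with non-empty compact values. Non-emptiness is immediate from Assumption \ref{asmp:SetValuedMap}(b)--(c). Compactness follows by noting that $\mathcal{C}(x)$ is a closed subset of the compact set $W \times Y$; closedness is inherited from the closed graphs of $\mathcal{W}$ and $\mathcal{G}$, which are guaranteed by upper semi-continuity with compact values. For continuity, I would verify upper and lower semi-continuity from the corresponding properties of $\mathcal{W}$ and $\mathcal{G}$. Upper semi-continuity is easy: any sequence $\{x^k\} \to x$ with $(t^k, v^k) \in \mathcal{C}(x^k)$ converging to $(t,v)$ automatically satisfies $t \in \mathcal{W}(x)$ and $v \in \mathcal{G}(x,t)$ by closedness of graphs. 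Lower semi-continuity is the main technical step: given $(t,v) \in \mathcal{C}(x)$ and $\{x^k\} \to x$, lower semi-continuity of $\mathcal{W}$ produces $t^k \in \mathcal{W}(x^k)$ with $t^k \to t$, and then lower semi-continuity of $\mathcal{G}$ at $(x^k, t^k) \to (x, t)$ produces $v^k \in \mathcal{G}(x^k, t^k)$ with $v^k \to v$.

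With continuity of $\mathcal{C}$ in hand and continuity of $g(x,t,v) \triangleq -\sum_{i \in S} f_i(x,v,t)$ on $X \times W \times Y$ (Assumption \ref{asmp:fi}(c)), Lemma \ref{lemma:marginal} applied to $\max_{(t,v) \in \mathcal{C}(x)} g(x,t,v)$ yields immediately that the marginal set-valued map, which is exactly $\mathcal{H}_S(x)$, is non-empty by the Weierstrass theorem, compact-valued, and upper semi-continuous in $x$, establishing (a), (c), and (d) in one stroke.

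For convexity (b), I would take $(t_1, v_1), (t_2, v_2) \in \mathcal{H}_S(x)$ and $\lambda \in [0,1]$, and form $(t_\lambda, v_\lambda) = \lambda(t_1, v_1) + (1-\lambda)(t_2, v_2)$. Then $t_\lambda \in \mathcal{W}(x)$ by convexity of $\mathcal{W}(x)$ (Assumption \ref{asmp:SetValuedMap}(c)), while convexity of $\text{graph}\mathcal{G}$ (Assumption \ref{asmp:SetValuedMap}(b)) applied to the two triples $(x, t_j, v_j) \in \text{graph}\mathcal{G}$ yields $v_\lambda \in \mathcal{G}(x, t_\lambda)$. Optimality then follows from Assumption \ref{asmp:fi}(b), which asserts that each $f_i(x, \cdot, \cdot)$ with $i \in S$ is convex on $Y \times W$; hence the objective at $(t_\lambda, v_\lambda)$ is bounded above by the convex combination of the two minimum values, which is itself the minimum, so $(t_\lambda, v_\lambda) \in \mathcal{H}_S(x)$. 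The main obstacle I anticipate is carrying out the lower semi-continuity argument for $\mathcal{C}$ cleanly, since it requires chaining approximating sequences through $\mathcal{W}$ and then through $\mathcal{G}$; this is essentially the continuity of the composition of two continuous compact-valued set-valued maps, a standard fact in the Aubin--Frankowska framework already cited in the paper.
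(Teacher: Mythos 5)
Your proposal is correct and follows essentially the same route as the paper: your auxiliary map $\mathcal{C}$ is exactly the paper's map $\mathcal{J}$ in Lemma \ref{lemmaJ}, whose continuity and non-empty compact convex values are established by the same chained lower-semicontinuity and closed-graph arguments, after which the paper likewise obtains (a), (c), (d) via Lemma \ref{lemma:marginal} (with compactness written as the intersection of $\mathcal{J}(x)$ with a closed level set) and proves (b) directly from the convexity of $\mathcal{W}(x)$, the convex graph of $\mathcal{G}$, and the convexity of $f_i$ in $(y,w)$ for $i\in S$.
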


\begin{lem}
	\label{lemma:F}
	Let Assumptions \ref{asmp:SetValuedMap}-\ref{asmp:fi} hold.
	The set-valued map $\mathcal{F}_S$ defined in \eqref{def:FS} has the following properties:

	(a) For any $(x,\mathbf{y},w)\in X\times(\prod_{i=1}^{n+1}Y)\times W$, $\mathcal{F}_S(x,\mathbf{y},w)$ is non-empty, compact and convex;
		
	(b) $\mathcal{F}_S(x,\mathbf{y},w)$ is upper semi-continuous with respect to $(x,\mathbf{y},w)$.
\end{lem}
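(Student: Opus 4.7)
The plan is to exploit the product structure of $\mathcal{F}_S$ and reduce both claims to the properties of its constituent maps $\mathcal{H}_i$ and $\mathcal{H}_S$ that have already been established in Lemmas \ref{lemma:Hi} and \ref{lemma:H0}. A useful preliminary observation is that, in view of definition \eqref{def:FS} together with the parenthetical remark following \eqref{def:F}, the map $\mathcal{F}_S$ only genuinely depends on $x$ and $w$; the $\mathbf{y}$ arguments enter as formal placeholders and can be suppressed when checking the set-theoretic properties of the values. Thus $\mathcal{F}_S(x,\mathbf{y},w)=\mathcal{H}_S(x)\times\prod_{i\in N}\mathcal{H}_i(x_{-i},w)$.

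For part (a), I would simply invoke the three items of Lemma \ref{lemma:H0} and of Lemma \ref{lemma:Hi}: $\mathcal{H}_S(x)$ and each $\mathcal{H}_i(x_{-i},w)$ are non-empty, compact, and convex. Since the finite Cartesian product (with the product topology) of non-empty sets is non-empty, of compact sets is compact (Tychonoff, trivially in finite dimensions), and of convex sets is convex, each of the three desired properties of $\mathcal{F}_S(x,\mathbf{y},w)$ follows at once.

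For part (b), I would use the standard fact that the Cartesian product of finitely many upper semi-continuous set-valued maps with non-empty compact values is itself upper semi-continuous with non-empty compact values. Lemma \ref{lemma:H0}(d) yields upper semi-continuity of $\mathcal{H}_S$ in $x$, and Lemma \ref{lemma:Hi}(d) yields upper semi-continuity of each $\mathcal{H}_i$ in $(x_{-i},w)$. Extending these maps trivially to the larger argument tuple $(x,\mathbf{y},w)$ preserves upper semi-continuity, because composing with the continuous projections $(x,\mathbf{y},w)\mapsto x$ and $(x,\mathbf{y},w)\mapsto(x_{-i},w)$ does not alter semi-continuity. The product rule then yields the required upper semi-continuity of $\mathcal{F}_S$.

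No substantive obstacle is anticipated, since the heavy lifting was done in Lemmas \ref{lemma:Hi} and \ref{lemma:H0}. The one step warranting explicit mention is the product rule for upper semi-continuity of set-valued maps, which is valid here precisely because each factor has compact values (by Lemmas \ref{lemma:Hi}(b) and \ref{lemma:H0}(c)); without compactness of values one would need to be more careful, but that pitfall is avoided in our setting. The proof is therefore an assembly argument that concatenates the established properties of the component maps with the behavior of Cartesian products.
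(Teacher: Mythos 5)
Your proposal is correct and follows essentially the same route as the paper: the paper likewise treats $\mathcal{F}_S(x,\mathbf{y},w)$ as the Cartesian product of the values of $\mathcal{H}_S$ and the $\mathcal{H}_i$, inherits non-emptiness, compactness and convexity factorwise, and obtains upper semi-continuity from the product/composition rule for upper semi-continuous set-valued maps (citing \cite[Chapter 1.1 Proposition 1]{Aubin:Differential}). Your explicit remarks that the $\mathbf{y}$ arguments are formal placeholders and that compact values are what validate the product rule are fine elaborations of the same argument, not a different approach.
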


Key properties of $\mathcal{H}_i$, $\mathcal{H}_S$ and $\mathcal{F}_S$ are given by Lemma \ref{lemma:Hi}, Lemma \ref{lemma:H0} and Lemma \ref{lemma:F}, respectively.
Proof of Lemma \ref{lemma:Hi} and \ref{lemma:H0} can be found in the Appendix \ref{appendix2}.
They indicate that $\mathcal{H}_i$ and $\mathcal{H}_S$ are upper semi-continuous set-valued maps with non-empty compact convex values. Lemma \ref{lemma:F} can be induced from Lemma \ref{lemma:Hi} and \ref{lemma:H0} as follows: Since the values of $\mathcal{H}_i$ and $\mathcal{H}_S$ are non-empty, compact, and convex, so is the value of $\mathcal{F}_S$, by recalling that the value of $\mathcal{F}_S$ is the Cartesian product of the value of $\mathcal{H}_i$ and $\mathcal{H}_S$. Since $\mathcal{H}_i$ and $\mathcal{H}_S$ are upper semi-continuous according to Lemma \ref{lemma:Hi} and \ref{lemma:H0}, so is $\mathcal{F}_S$, by noting the continuity of composite set-valued map \cite[Chapter 1.1 Proposition 1]{Aubin:Differential}.

Now we are ready to give the proof of Theorem \ref{equilibrium_existence}.

\begin{proof}[Proof of Theorem \ref{equilibrium_existence}]
\textcolor{white}{text}

The existence of a fixed point of $\mathcal{F}_S$ can be guaranteed by invoking Kakutani's fixed point theorem (Lemma \ref{kakutani}), since $\mathcal{F}_S$ is upper semi-continuous with non-empty compact convex values, according to Lemma \ref{lemma:F}. Specifically, there exists $(x^*,\mathbf{y}^{*},w^*)\in X\times (\prod_{i= 1}^{n+1}Y)\times W$ such that
\begin{eqnarray} 
\notag
(x^*,\mathbf{y}^{*},w^*)\in \mathcal{F}_S(x^*,\mathbf{y}^{*},w^*).
\end{eqnarray}
According to the definition of the equilibrium of $\Theta^{\rm DDU}$ in Definition \ref{def:eq}, the fixed point of $\mathcal{F}_S$ is exactly a weak equilibrium point of game $\Theta^{\rm DDU}$. If Assumption \ref{asmp:fi} (b) is satisfied with $S=N$, the fixed point of $\mathcal{F}_S|_{S=N}$ is exactly a strong equilibrium point of game $\Theta^{\rm DDU}$. Proof of Theorem \ref{equilibrium_existence} is completed. 
\end{proof}

Theorem \ref{equilibrium_existence} provides a sufficient condition to the existence of the weak and strong equilibrium of $\Theta^{\rm DDU}$, namely, Assumptions \ref{asmp:SetValuedMap}-\ref{asmp:fi}. If we only focus on the weak equilibrium points of $\Theta^{\rm DDU}$, this sufficient condition can be further relaxed by substituting Assumption \ref{asmp:fi} (b) with the following hypothesis.
\begin{assum}
\label{asmp:fi-extend}
There exists a non-empty subset of $N$ which is $S\subseteq N,S\neq \emptyset$ such that for any $x\in X$ and $i\in S$, $-f_i(x,y,w)$ is quasi-concave with respect to $(y,w)$.
\end{assum}
Assumption \ref{asmp:fi-extend} differs from Assumption \ref{asmp:fi} (b) in substituting the concavity of $-f_i$ with quasi-concavity of $-f_i$, which is a more relaxed condition than the original one. Results based on Assumption \ref{asmp:fi-extend} are stated in the following corollary.

\begin{cor}
\label{cor}
Game $\Theta^{\rm DDU}$ is an N-S-N game under DDUs where $X_i,Y$ and $W$ are non-empty compact convex sets. Assume that Assumption \ref{asmp:SetValuedMap}, Assumption \ref{asmp:fi} (a) and (c), and Assumption \ref{asmp:fi-extend} hold, then there exists at least one weak equilibrium point of game $\Theta^{\rm DDU}$. 
\end{cor}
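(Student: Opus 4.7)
The plan is to mirror the proof of Theorem \ref{equilibrium_existence} but replace the linear-scalarization selector $\mathcal{H}_S$ by a Chebyshev (min--max) selector. Assumption \ref{asmp:fi} (b) was invoked in the proof of Lemma \ref{lemma:H0} precisely to conclude that $\sum_{i\in S} f_i(x,\cdot,\cdot)$ is convex and hence has a convex argmin. Under the weaker Assumption \ref{asmp:fi-extend}, only quasi-convexity of each $f_i(x,\cdot,\cdot)$ in $(y,w)$ is available, and quasi-convexity is \emph{not} preserved under sums; however, it \emph{is} preserved under finite pointwise maxima, so a min--max scalarization still yields a convex minimizer set.

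Concretely, I would introduce, for each $x\in X$,
\begin{equation}
\notag
\mathcal{H}_{\max}(x) \triangleq \arg\min_{t,v}\ \max_{i\in S} f_i(x,v,t) \quad \mathrm{s.t.}\ t\in\mathcal{W}(x),\ v\in\mathcal{G}(x,t),
\end{equation}
and verify in turn that: (i) the feasibility map $(t,v)\in\mathcal{W}(x)\times Y$ with $v\in\mathcal{G}(x,t)$ has non-empty, compact, convex values in $x$ (non-emptiness and compactness from Assumption \ref{asmp:SetValuedMap}; convexity for fixed $x$ follows from the convex values of $\mathcal{W}$ together with the convex graph of $\mathcal{G}$, just as in the proof of Lemma \ref{lemma:H0}); (ii) the objective $\max_{i\in S} f_i(x,\cdot,\cdot)$ is continuous by Assumption \ref{asmp:fi} (c) and quasi-convex in $(v,t)$, since each $f_i$ is quasi-convex by Assumption \ref{asmp:fi-extend} and a finite pointwise maximum of quasi-convex functions has convex sublevel sets (the intersection of the individual sublevel sets); (iii) the argmin of a continuous quasi-convex function over a non-empty, compact, convex set is non-empty, compact and convex, so $\mathcal{H}_{\max}(x)$ inherits these properties; (iv) upper semi-continuity of $\mathcal{H}_{\max}$ in $x$ follows from Lemma \ref{lemma:marginal} applied to the continuous feasibility map and the continuous objective, exactly as in the proof of Lemma \ref{lemma:H0}.

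I would then establish the inclusion $\mathcal{H}_{\max}(x) \subseteq \mathcal{H}_{\rm WPNE}(x)$ by a short contradiction argument: if $(w^*,y^*)\in\mathcal{H}_{\max}(x)$ were not weakly Pareto efficient for the full $n$-objective problem, there would exist feasible $(t,v)$ with $f_j(x,v,t)<f_j(x,y^*,w^*)$ for every $j\in N$, in particular for every $j\in S$, which would force $\max_{i\in S} f_i(x,v,t)<\max_{i\in S} f_i(x,y^*,w^*)$ and contradict minimality. Defining
\begin{equation}
\notag
\mathcal{F}_{\max}(x,\mathbf{y},w) \triangleq \mathcal{H}_{\max}(x,y^{[w]}) \times \prod\nolimits_{i\in N} \mathcal{H}_i(x_{-i},y^{[i]},w),
\end{equation}
Lemma \ref{lemma:Hi} (which does not invoke Assumption \ref{asmp:fi} (b)) together with (i)--(iv) shows that $\mathcal{F}_{\max}$ is upper semi-continuous with non-empty compact convex values, so Kakutani's theorem (Lemma \ref{kakutani}) yields a fixed point; by the inclusion this is also a fixed point of $\mathcal{F}_{\rm WPNE}$, and by Lemma \ref{relationship} (a) it is a weak equilibrium of $\Theta^{\rm DDU}$.

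The main technical obstacle is step (iii): quasi-convexity is strictly weaker than convexity, so one must argue convexity of the argmin explicitly as the intersection of the convex feasible set with the sublevel set $\{(t,v):\max_{i\in S} f_i(x,v,t)\le m^*\}$ at the optimal value $m^*$, and invoke quasi-convexity of $\max_{i\in S} f_i$ to certify that this sublevel set is convex. Once this is in place, the remaining steps --- convexity of the feasibility map, continuity of $\max_{i\in S} f_i$, upper semi-continuity via Lemma \ref{lemma:marginal}, and the Kakutani application --- are routine adaptations of the arguments already used for $\mathcal{H}_S$ and $\mathcal{F}_S$, and no additional hypothesis beyond those stated in the corollary is required.
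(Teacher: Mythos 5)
Your proposal is correct, and every step checks out: quasi-convexity of each $f_i(x,\cdot,\cdot)$ for $i\in S$ (Assumption \ref{asmp:fi-extend}) is preserved by the finite pointwise maximum, so the argmin of $\max_{i\in S}f_i$ over the feasible set $\mathcal{J}(x)$ is non-empty, compact and convex; upper semi-continuity follows from Lemma \ref{lemma:marginal} exactly as for $\mathcal{H}_S$; the inclusion $\mathcal{H}_{\max}(x)\subseteq\mathcal{H}_{\rm WPNE}(x)$ is valid because strictly decreasing all $f_j$, $j\in N\supseteq S$, would strictly decrease the maximum over the finite set $S$; and Lemma \ref{lemma:Hi} indeed never invokes Assumption \ref{asmp:fi} (b), so Kakutani plus Lemma \ref{relationship} (a) finishes the argument. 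The paper, however, takes an even simpler route: it fixes an arbitrary single index $\hat{i}\in S$ and uses the selector $\mathcal{H}_{\hat{i}\in S}(x)\triangleq\arg\min_{t,v}f_{\hat{i}}(x,v,t)$ subject to $t\in\mathcal{W}(x)$, $v\in\mathcal{G}(x,t)$, in place of your Chebyshev selector. A minimizer of one objective is automatically weakly Pareto efficient (no feasible point can strictly improve all objectives if it cannot strictly improve $f_{\hat{i}}$), and convexity of the argmin follows directly from quasi-convexity of that single function intersected with the convex feasible set, so no preservation-under-maxima argument is needed. Your min--max scalarization costs a little extra bookkeeping (continuity and quasi-convexity of the max) but buys nothing additional for existence, since both selectors land inside $\mathcal{H}_{\rm WPNE}$ and satisfy the same Kakutani hypotheses; it would only matter if one cared about which weak Pareto point is selected at equilibrium.
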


Corollary \ref{cor} can be proved likewise by properly constructing set-valued maps. Specifically, choose any $\hat{i}$ from $S$ and consider the following set-valued maps
\begin{eqnarray}
\notag
\begin{split}
\mathcal{H}_{\hat{i}\in S}(x)\triangleq \arg &\min_{t,v} f_{\hat{i}}(x,y,w)\\
&{\rm s.t.}\ t\in\mathcal{W}(x),v\in \mathcal{G}(x,t)
\end{split}
\end{eqnarray}
\begin{eqnarray}
\notag
\mathcal{F}_{\hat{i}\in S}(x,\mathbf{y},w)\triangleq\mathcal{H}_{\hat{i}\in S}(x,y^{[w]})\times\prod\nolimits_{i\in N}\mathcal{H}_i(x_{-i},y^{[i]},w).
\end{eqnarray}
which are upper semi-continuous and with non-empty compact convex values.

\section{An Illustrative Example}
\label{mathproblem}
Consider a game $\Theta^{\rm DDU}$ with two leaders $N=\{1,2\}$, two followers $M=\{1,2\}$, and one decision-dependent uncertain parameter. Strategies of the leaders are denoted by $x_i\in X_i\subseteq\mathbb{R}^{2},i\in N$. Strategies of the followers are $y_j\in \mathbb{R}^1,j\in M$. The uncertain parameter is $w\in W\subseteq \mathbb{R}^1$. $X_1=X_2=[0,1]^2,W=[-4,4]$. The payoff that leader $i$ would like to maximize is
\begin{eqnarray}
\notag
&f_i(x,y,w)=a_i^{\mathsf{T}}x_i+b_i^{\mathsf T}y+c_i(d_i-w)^2,&
\end{eqnarray}
where $a_i,b_i\in\mathbb{R}^2,c_i,d_i\in\mathbb{R}^1$ are constant parameters and $y=(y_1,y_2)^{\mathsf{T}}$ is the collection of the followers' strategies. Specific values of $a_1=a_2=(1.3,0)^{\mathsf{T}},b_1=(-1.2,-1.2)^{\mathsf{T}},b_2=(0.4,0.4)^{\mathsf{T}},c_1=c_2=0.2,d_1=d_2=2$ are considered in this case. The uncertain $w$ is decision-dependent, the feasible map of which is
\begin{eqnarray*}
&\mathcal{W}(x)=\left\{w\in W|{w}^{\rm min}(x)\le w\le {w}^{\rm max}(x)\right\}&
\end{eqnarray*}
where
\begin{eqnarray*}
	\begin{split}
		&{w}^{\rm min}(x)=-4+\sigma_1^{\mathsf{T}}x_1+\sigma_2^{\mathsf{T}}x_2\\
		&{w}^{\rm max}(x)=4-\sigma_1^{\mathsf{T}}x_1-\sigma_2^{\mathsf{T}}x_2
	\end{split}
\end{eqnarray*}
and $\sigma_1=\sigma_2=(0,2)^{\mathsf{T}}$. By selecting appropriate $x$, the leaders can condense the range that the uncertainty parameter $w$ varies within. The followers' problems are
\begin{eqnarray*}
\begin{split}
y_1=&\max_{v\in\mathbb{R}^2}\ e_1^{\mathsf{T}}v\\
&{\rm s.t.}\ v_1\ge 0,g_1^{\mathsf{T}}x_1+k_1^{\mathsf{T}}x_2+h_1^{\mathsf{T}}v=w+\alpha_1 y_2
\end{split}\\
\begin{split}
y_2=&\max_{v\in\mathbb{R}^2}\ e_2^{\mathsf{T}}v\\
&{\rm s.t.}\ v_1\ge 0,g_2^{\mathsf{T}}x_1+k_2^{\mathsf{T}}x_2+h_2^{\mathsf{T}}v=w+\alpha_2 y_1
\end{split}
\end{eqnarray*}
where $e_1=e_2=(1,2)^{\mathsf{T}},g_1=g_2=(1,0)^{\mathsf{T}},k_1=k_2=(1,0)^{\mathsf{T}},h_1=h_2=(2,1)^{\mathsf{T}}\in\mathbb{R}^2,\alpha_1=\alpha_2=1$ are constant coefficients.

Next, we show the above case satisfies Assumptions \ref{asmp:SetValuedMap}-\ref{asmp:fi}. It is easy to verify that $f_i$ is continuous; $f_i$ is concave with respect to $x$ and $y$, and is convex with respect to $w$ and $y$; According to Lemma \ref{lemma:marginal}, the followers' response map $\mathcal{G}$ is continuous and the graph of $\mathcal{G}$ is convex by noting the explicit formation as follows
\begin{eqnarray*}
\mathcal{G}(x,w)=\left\{y\in\mathbb{R}^2:
\begin{array}{l}
y_1=-2(w-x_{1,1}-x_{2,1})\\
y_2=-2(w-x_{1,1}-x_{2,1})
\end{array}
\right\}
\end{eqnarray*}
where $x_{1,1}$ is the first element of $x_1$ and $x_{2,1}$ is the first element of $x_2$. Also, it is clear that $\mathcal{W}(x)$ is continuous and has compact convex values. Thus there exists an equilibrium of the game according to Theorem \ref{equilibrium_existence}.

To solve the strong equilibrium of the game, we substitute the followers' problems with the explicit form $\mathcal{G}(x,w)$ and apply the best response (BR) of BR algorithm (with Jacobi iteration) by iteratively solving the sub-problems of the two leaders till an equilibrium point is achieved. As for the sub-problem of the virtual player $w$, a pre-set weight factor $\lambda\in[0,1]$ is assigned to the multi-objective $f=(f_1,f_2)^{\mathsf{T}}$ to derive a specific Pareto solution. Thus the virtual player minimizes the weighted sum of leaders' payoffs $\lambda f_1+(1-\lambda)f_2$. Let $(x_1^*,x_2^*,w^*,y^*)$ denote the equilibrium of the game and $f_i^*=f_i(x^*,y^*,w^*),i\in N$. The following figures present the equilibrium results. 

\begin{figure}[!ht]
	\centering
	\includegraphics[width=0.45\textwidth]{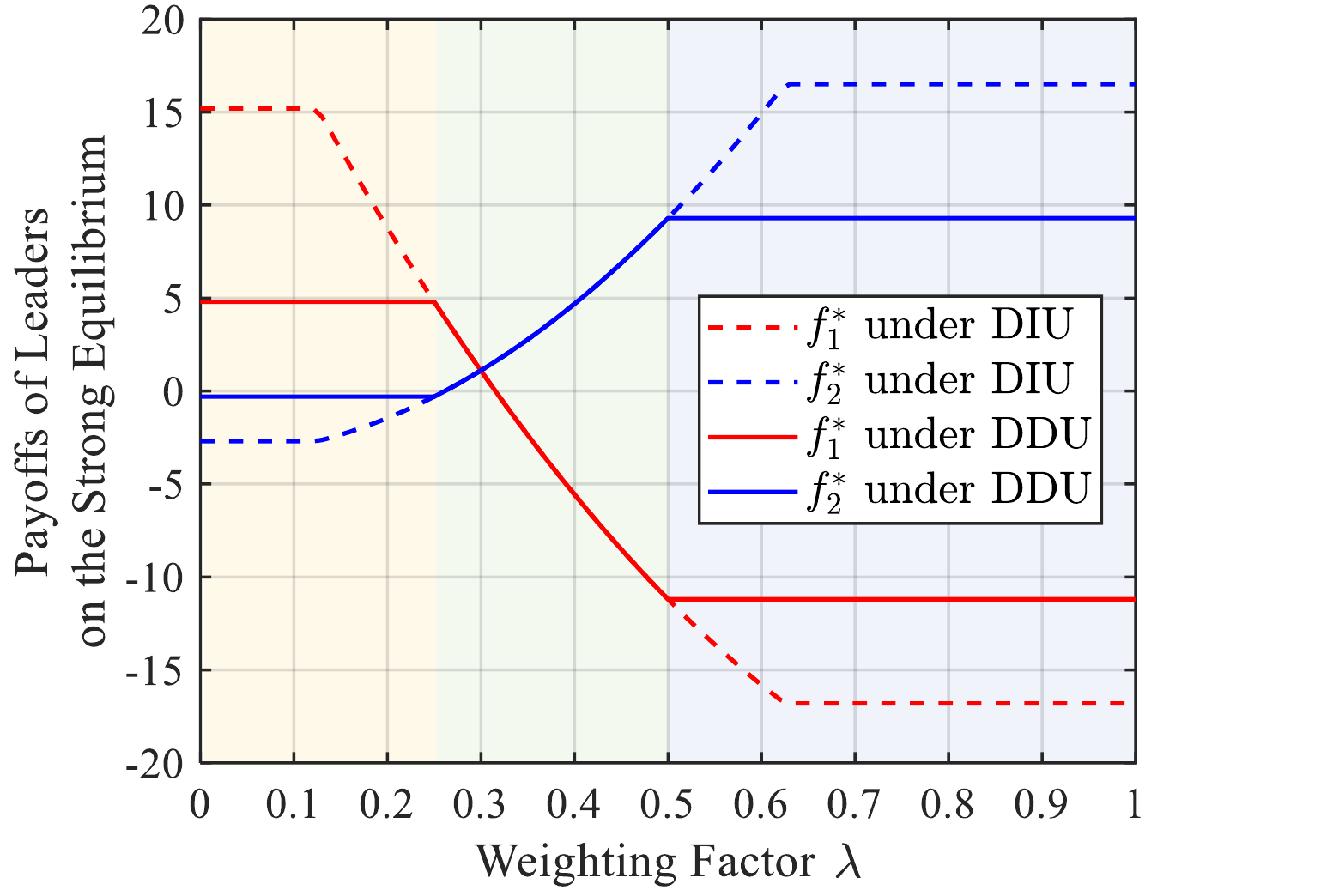}
	\caption{Payoffs of the two leaders under DIU and DDU.}
	\label{obj}
\end{figure}
Figure \ref{obj} shows how the DDU affects the strong equilibrium of the N-S-N game. For comparison, equilibrium under DIU is considered by setting $\mathcal{W}(x)=W$. The equilibrium under DIU is $x_1^*=(0,0)^{\mathsf{T}},x_2^*=(1,0)^{\mathsf{T}}$ for all $\lambda\in [0,1]$. As for the equilibrium under DDU, when $\lambda$ falling into $[0,0.25]$ (the yellow region in Figure \ref{obj}), the equilibrium under DDU is $x_1^*=(0,0)^{\mathsf{T}},x_2^*=(1,1)^{\mathsf{T}}$, indicating that the leader 2 promotes his payoff by restricting the uncertainty set of $w$ to $[-2,2]$. Though leader 1 has no incentive to do so, his payoff in this non-cooperative game is reduced due to the choice of leader 2 and the resulting worst-case $w^*$. When $\lambda$ falling into $[0.5,1]$ (the blue region in Figure \ref{obj}), the equilibrium under DDU is  $x_1^*=(0,1)^{\mathsf{T}},x_2^*=(1,0)^{\mathsf{T}}$, indicating the leader 1 would like to condense the uncertainty to improve his payoff, whereas leader 2 would suffer a loss. If $0.25<\lambda<0.5$ (the green region in Figure \ref{obj}), the equilibrium under DDU becomes $x_1^*=(0,0)^{\mathsf{T}},x_2^*=(1,0)^{\mathsf{T}}$, indicating that both the two leaders have no incentives to derive a better uncertainty set. Thus when $0.25<\lambda<0.5$, the equilibrium and the corresponding payoff of leaders under DDU and DIU are the same.
\vspace{0.5cm} 
\begin{figure}[!ht]
	\centering
	\includegraphics[width=0.45\textwidth]{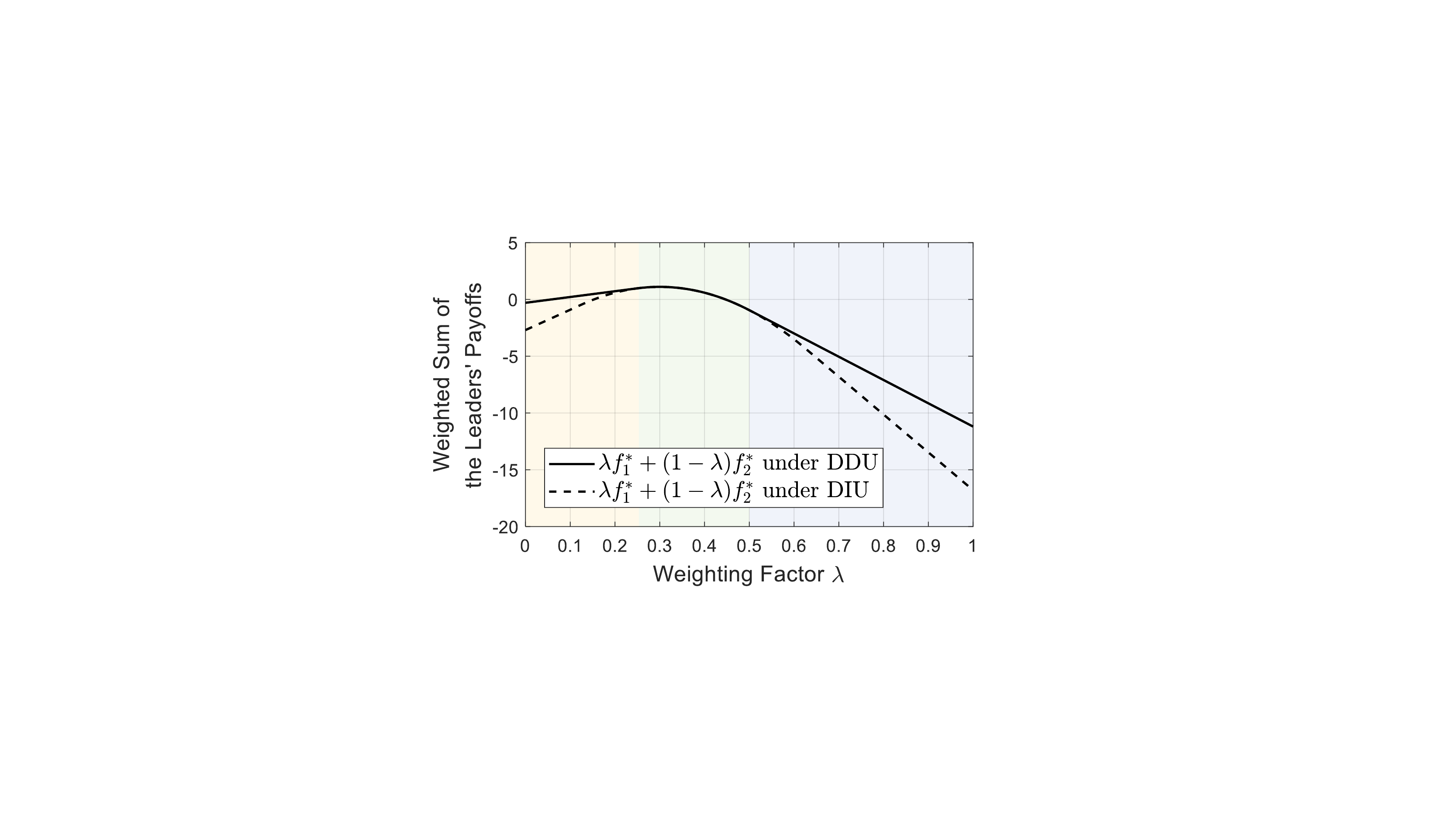}
	\caption{Weighted sum of leaders' payoffs under DIU and DDU.}
	\label{pareto1}
\end{figure}

\begin{figure}[!ht]
	\centering
	\includegraphics[width=0.45\textwidth]{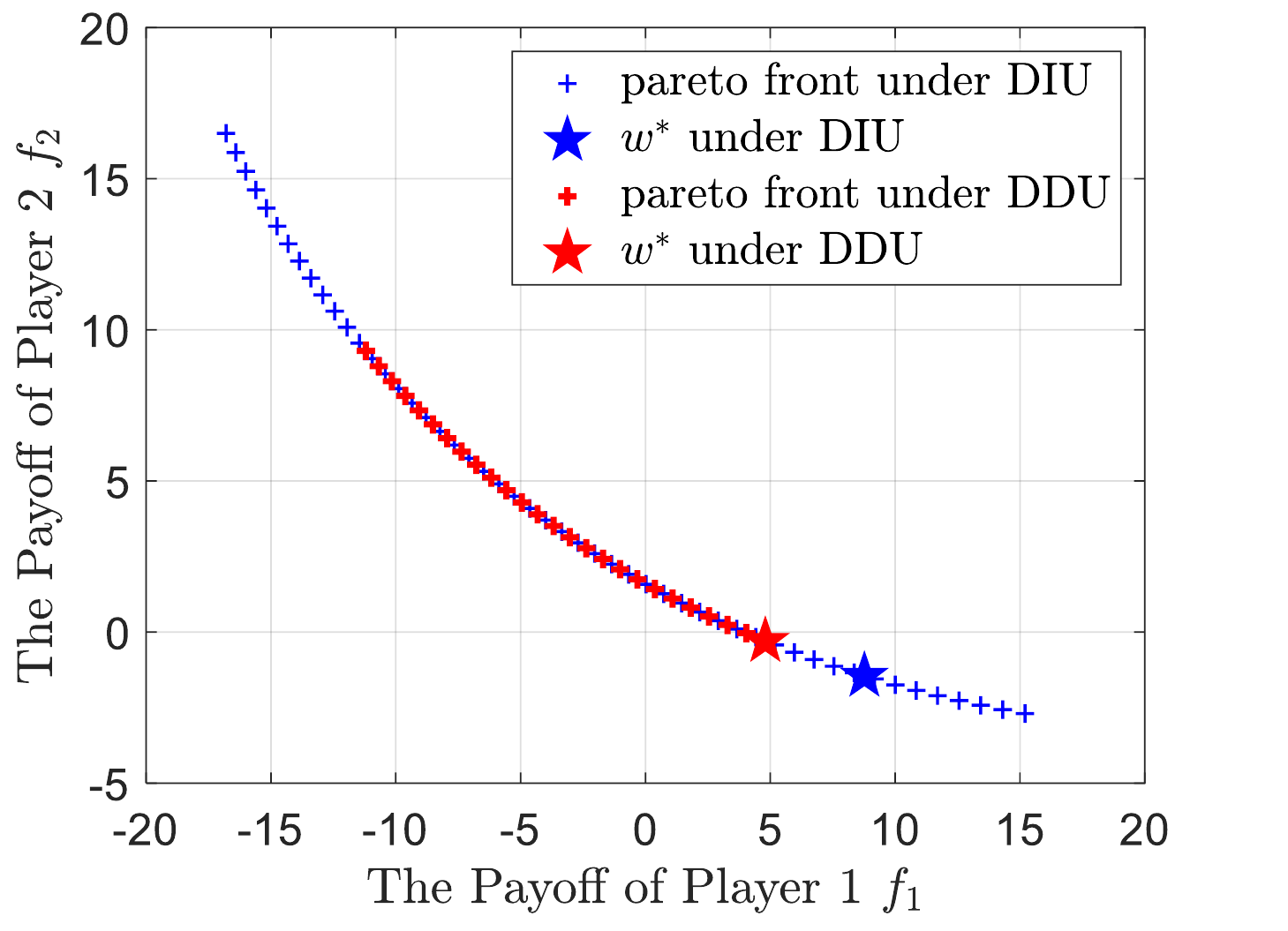}
	\caption{Pareto optimality at $x^*$ ($\lambda=0.2$) under DIU and DDU.}
	\label{pareto2}
\end{figure}

Figure \ref{pareto1} presents how the union utility of leaders is improved by taking the consideration of DDU. The virtual player $w$ tries to worse off the payoff of the two leaders, minimizing the weighted sum $\lambda f_1 + (1-\lambda)f_2$. Thus $\lambda f_1 + (1-\lambda)f_2$ can be viewed as the united utility of the two leaders. Figure \ref{pareto1} shows that the equilibrium $\lambda f_1^* + (1-\lambda)f_2^*$ under DDU is always greater than or equal to that under DIU, indicating that an improvement on leaders' union utility can be derived by exercising proactive control on the uncertainty.

Next, we fix the weight factor $\lambda$ as 0.2. When $\lambda=0.2$, the equilibrium under DDU is $x_1^*=(0,0)^{\mathsf{T}},x_2^*=(1,1)^{\mathsf{T}},w^*=2,y_1^*=y_2^*=-2$ while the equilibrium under DIU is $x_1^*=(0,0)^{\mathsf{T}},x_2^*=(1,0)^{\mathsf{T}},w^*=2.8,y_1^*=y_2^*=-3.6$. Figure \ref{pareto2} depicts the Pareto front of the virtual player's multi-objective optimization on the equilibrium point, i.e., the Pareto front of problem \eqref{def:moo}. The pentagrams denote the Pareto solution that corresponds with $\lambda=0.2$. It is observed that the Pareto front under DDU becomes shorter, indicating a more limited varying range of the uncertainty. The worst-case $w^*$ that minimizes $\lambda f_1 + (1-\lambda)f_2$ also changes with the presence of DDU.

\section{Conclusion}
\label{discussion}
This paper explores the existence of the equilibrium of Nash-Stackelberg-Nash games under decision-dependent uncertainties. We have mathematically formulated the class of games and rigorously define the Nash equilibrium. We have proved the existence of the Nash equilibrium. It is revealed that the players can leverage the dependency of uncertainties on decisions to restrict the negative influence of the uncertainties on their payoffs, which provides an insight in playing this class of games. The conducted work in this paper can be viewed as an extension of Nash-Stackelberg-Nash games to incorporate decision-dependent uncertainties with the idea of robust optimization. The model naturally encompasses the variants including the conventional Stackelberg game, the single-leader multi-follower game, and the multi-leader single-follower game. The model also enables potential extensions to other forms of game problems such as multi-cluster games.

%The study lays a theoretical foundation for the application of N-S-N games under DDUs.

\section{Appendix: Proof of Lemma \ref{lemma:Hi} and Lemma \ref{lemma:H0}}
\label{appendix2}
We start the proof of Lemma \ref{lemma:Hi} with the following lemma.

\begin{lem}
	\label{lemmaK}
	$\forall i\in N$, if $\mathcal{X}_i:X_{-i}\rightrightarrows X_i$ satisfies Assumption \ref{asmp:SetValuedMap} (a) and $\mathcal{G}:X\times W\rightrightarrows Y$ satisfies Assumption \ref{asmp:SetValuedMap} (b), then the set-valued map below
	\begin{eqnarray}
	\label{def:Ki}
	\begin{split} &\mathcal{K}_i(x_{-i},w)\triangleq\Big\{x_i\in X_i,y\in Y:\Big.\\
	&\quad\quad\quad\Big.x_i\in \mathcal{X}_i(x_{-i}),y\in \mathcal{G}((x_i,x_{-i}),w)
	\Big\}
	\end{split}
    \end{eqnarray}
    has the following properties:

	(a) $\mathcal{K}_i:X_{-i}\times W\rightrightarrows X_i\times Y$ is continuous in $X_{-i}\times W$;
	
	(b) $\forall x_{-i}\in X_{-i},\forall w\in W$, $\mathcal{K}_i(x_{-i},w)$ is non-empty compact convex set.
\end{lem}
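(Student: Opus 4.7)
The plan is to verify part (b) first, since the compactness and closedness arguments needed for part (a) are most naturally phrased after the structural facts about the values of $\mathcal{K}_i$ have been established. Throughout, the convex-graph hypothesis on $\mathcal{G}$ plays a distinguished role.

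For non-emptiness of $\mathcal{K}_i(x_{-i}, w)$, I pick any $x_i\in\mathcal{X}_i(x_{-i})$ (non-empty by Assumption \ref{asmp:SetValuedMap}(a)) and any $y\in\mathcal{G}((x_i,x_{-i}),w)$ (non-empty by Assumption \ref{asmp:SetValuedMap}(b)). For convexity, take $(x_i^{(1)},y^{(1)}),(x_i^{(2)},y^{(2)})\in\mathcal{K}_i(x_{-i},w)$ and $\lambda\in[0,1]$: the convex combination of the $x_i$-components lies in $\mathcal{X}_i(x_{-i})$ by convexity of values, and for the $y$-components I observe that each triple $(((x_i^{(k)},x_{-i}),w),y^{(k)})$ lies in $\text{graph}\,\mathcal{G}$, so their convex combination $(((\lambda x_i^{(1)}+(1-\lambda)x_i^{(2)},x_{-i}),w),\lambda y^{(1)}+(1-\lambda)y^{(2)})$ lies in $\text{graph}\,\mathcal{G}$ by the convex-graph hypothesis, which is precisely the needed inclusion. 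Compactness holds because $\mathcal{K}_i(x_{-i},w)\subseteq X_i\times Y$ is bounded in a compact set and is closed, being the intersection of $\mathcal{X}_i(x_{-i})\times Y$ with the preimage of the closed set $\text{graph}\,\mathcal{G}$ under the continuous map $(x_i,y)\mapsto((x_i,x_{-i}),w,y)$.

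For (a), upper semicontinuity follows from the closed-graph characterization: the graph of $\mathcal{K}_i$ is closed in the compact product space $X_{-i}\times W\times X_i\times Y$, as it is the intersection of two closed sets arising from the closed graphs of $\mathcal{X}_i$ and $\mathcal{G}$; together with the compactness of $\mathcal{K}_i$'s values, this gives upper semicontinuity.

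The main obstacle is lower semicontinuity, which requires a carefully staged selection argument. Given a convergent sequence $(x_{-i}^k,w^k)\to(x_{-i}^*,w^*)$ and a point $(x_i^*,y^*)\in\mathcal{K}_i(x_{-i}^*,w^*)$, I first invoke the lower semicontinuity of $\mathcal{X}_i$ to extract $x_i^k\in\mathcal{X}_i(x_{-i}^k)$ with $x_i^k\to x_i^*$. This choice is critical because it determines the arguments fed to $\mathcal{G}$ at the second stage. Since $((x_i^k,x_{-i}^k),w^k)\to((x_i^*,x_{-i}^*),w^*)$ and $y^*\in\mathcal{G}((x_i^*,x_{-i}^*),w^*)$, the lower semicontinuity of $\mathcal{G}$ then yields $y^k\in\mathcal{G}((x_i^k,x_{-i}^k),w^k)$ with $y^k\to y^*$. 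The resulting sequence $(x_i^k,y^k)\in\mathcal{K}_i(x_{-i}^k,w^k)$ converges to $(x_i^*,y^*)$, which together with the upper semicontinuity above yields the claimed continuity of $\mathcal{K}_i$.
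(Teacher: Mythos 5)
Your proposal is correct and follows essentially the same route as the paper: the two-stage lower-semicontinuity selection (first $x_i^k$ via $\mathcal{X}_i$, then $y^k$ via $\mathcal{G}$ at the perturbed arguments), upper semicontinuity via closedness of the graph inherited from the closed graphs of $\mathcal{X}_i$ and $\mathcal{G}$, and non-emptiness, compactness and convexity of the values from the compact convex values of $\mathcal{X}_i$ together with the convex graph of $\mathcal{G}$. The only differences are cosmetic (you phrase graph convexity of $\mathcal{G}$ pointwise rather than via set addition, and compactness of values via an intersection of closed sets), so no further changes are needed.
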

\begin{proof}[Proof of Lemma \ref{lemmaK}]
	\textcolor{white}{text}
	
	\textit{\textbf{Assertion (a):}} We prove $\mathcal{K}_i$ is lower semi-continuous on $X_{-i}\times W$ by recalling the fact that $\mathcal{K}_i$ is lower semi-continuous if and only if $\forall x_{-i}^k\rightarrow x_{-i}$, $ w^k\rightarrow w$ and $(x_{i},y)\in \mathcal{K}_{i}(x_{-i},w)$, there exists $(x_i^k,y^k)\in\mathcal{K}_i(x_{-i}^k,w^k)$ such that $x_i^k\rightarrow x_i,y^k\rightarrow y$ \cite[Chapter 1.1 Definition 2]{Aubin:Differential}. Since $\mathcal{X}_i$ is lower semi-continuous, there exists $x_i^k\in \mathcal{X}_i(x_{-i}^k)$ such that $x_i^k\rightarrow x_i$. Similarly, since $\mathcal{G}$ is lower semi-continuous, there exists $y^k\in \mathcal{G}(x^k,w^k)$ such that $y^k\rightarrow y$. Thus $\forall x_{-i}^k\rightarrow x_{-i}$, $w^k\rightarrow w$, $\forall (x_i,y)\in \mathcal{K}_{i}(x_{-i},w)$, we find sequence $(x_i^k,y^k)\in\mathcal{K}_i(x_{-i}^k,w^k)$ such that $x_i^k\rightarrow x_i,y^k\rightarrow y$.
	
	Next, we would like to show that the graph of $\mathcal{K}_i$ is closed. Since both $\mathcal{X}_i$ and $\mathcal{G}$ are continuous set-valued map with compact domains and compact values, the graph of $\mathcal{X}_i$ and $\mathcal{G}$ are closed \cite[Chapter 1.4.1 Proposition 1.4.8]{Aubin:Setvalued}. Thus $\forall \{x_{-i}^k\}\rightarrow x_{-i}$, $\{w^k\}\rightarrow w$, $\{x_i^k\in\mathcal{X}_i(x_{-i}^k)\}\rightarrow x_i$, $\{y^k\in\mathcal{G}(x^k,w^k)\}\rightarrow y$, there are $x_i\in \mathcal{X}_i(x_{-i})$ and $y\in \mathcal{G}(x,w)$, which means $(x_i,y)\in \mathcal{K}_i(x_{-i},w)$. So the graph of $\mathcal{K}_i$ is closed according to definition of closed set-valued map. Thus $\mathcal{K}_i$ is upper semi-continuous \cite[Chapter 1.1, Corollary 1]{Aubin:Differential}. Since $\mathcal{K}_i$ is both lower and upper semi-continuous, $\mathcal{K}_i$ is continuous set-valued map.
	
	\textit{\textbf{Assertion (b):}} Since $\mathcal{X}_i$ and $\mathcal{G}$ are non-empty, so is $\mathcal{K}_i$. Moreover, since $\mathcal{K}_i$ is defined on compact domain and $\mathcal{K}_i$ is closed, it is compact. Next, we prove that $\mathcal{K}_i(x_{-i},w)$ is convex set for any $x_{-i}\in X_{-i},w\in W$. For any $(x_i^1,y^1),(x_i^2,y^2)\in\mathcal{K}_i(x_{-i},w)$ and $\forall \gamma\in(0,1)$, there are
	\begin{eqnarray}
	\notag
	y^1\in \mathcal{G}((x_i^1,x_{-i}),w^1),y^2\in \mathcal{G}((x_i^2,x_{-i}),w^2).
	\end{eqnarray}
	Denote $\hat{x}=((\gamma x_i^1+(1-\gamma)x_i^2),x_{-i})$. Since $\mathcal{G}$ is convex set-valued map (i.e., the graph of $\mathcal{G}$ is convex),
	\begin{eqnarray}
	\notag
	\gamma\mathcal{G}((x_i^1,x_{-i}),w^1)+(1-\gamma)\mathcal{G}((x_i^2,x_{-i}),w^2)\\
	\notag
	\subseteq \mathcal{G}(\hat{x},\gamma w^1+(1-\gamma)w^2).
	\end{eqnarray}
	Thus
	\begin{eqnarray}
	\label{appen_3}
	\gamma y^1+(1-\gamma)y^2\in 
	\mathcal{G}(\hat{x},\gamma w^1+(1-\gamma)w^2).
	\end{eqnarray}
	Since $\mathcal{X}_i(x_{-i})$ is convex,
	\begin{eqnarray}
	\label{appen_4}
	\gamma x_i^1+(1-\gamma)x_i^2\in\mathcal{X}_i(x_{-i}). 
	\end{eqnarray}
	Thus \eqref{appen_3} and \eqref{appen_4} together implies that
	\begin{eqnarray}
	\notag
	(\gamma x_i^1+(1-\gamma)x_i^2,\gamma y^1+(1-\gamma)y^2)\in \mathcal{K}_i(x_{-i},w),
	\end{eqnarray}
	which completes the proof.

\end{proof}

Next, we give the proof of Lemma \ref{lemma:Hi} (a)-(d) by sequence.
\begin{proof}[Proof of Lemma \ref{lemma:Hi}]
\textcolor{white}{text}

\textit{\textbf{Assertion (a):}} Since Assumption \ref{asmp:SetValuedMap} holds,
	$\mathcal{X}_i(x_{-i})$ is non-empty compact set for any $x_{-i}\in X_{-i}$, and so is  $\mathcal{G}(x,w)$ for any $x\in X$ and $w\in W$. Thus $\forall i\in N,\forall x_{-i}\in X_{-i}$ and $\forall w\in W$, $\mathcal{H}_i(x_{-i},w)$ is non-empty, completing the proof of assertion (a).

	\textit{\textbf{Assertion (b):}} Next we  show that $\mathcal{H}_i(x_{-i},w)$ is compact for any $(x_{-i},w)\in X_{-i}\times W$. Let $f_i^*$ denote the optimal objective value of the maximization problem in \eqref{def:Hi}. Then $\mathcal{H}_i(x_{-i},w)$ can be represented as the intersection of two closed sets:
	\begin{eqnarray}
	\notag
	\mathcal{H}_i(x_{-i},w)=\mathcal{K}_i(x_{-i},w)\cap \left\{(x_i,y)|f_i(x,y,w)\ge f_i^*\right\}.
	\end{eqnarray}
	where $\mathcal{K}_i$ is defined in \eqref{def:Ki} and is a compact subset of $X\times Y$ as stated in Lemma \ref{lemmaK} (b). $\left\{(x_i,y)|f_i(x,y,w)\ge f_i^*\right\}$ is closed since $f_i$ is continuous as assumed in Assumption \ref{asmp:fi} (c). Thus $\mathcal{H}_i(x_{-i},w)$ must be a closed subset of $X_i\times Y$. Moreover, since $X_i$ and $Y$ are compact, $\mathcal{H}_i(x_{-i},w)$ is compact, which completes the proof of assertion (b).

	\textit{\textbf{Assertion (c):}} $\forall (x_i^1,y^1),(x_i^2,y^2)\in \mathcal{H}_i(x_{-i},w)$ and  $\forall\gamma\in(0,1)$, denote
	\begin{eqnarray}
	\notag
	\hat{x}_{i}=\gamma x_i^1+(1-\gamma)x_i^2,
	\end{eqnarray}
    and
    \begin{eqnarray}
    \notag
    \hat{y}=\gamma y^1+(1-\gamma)y^2.
    \end{eqnarray}
	since $\mathcal{X}_i(x_{-i})$ is a convex set as assumed in Assumption \ref{asmp:SetValuedMap} (a), $\hat{x}_{i}\in \mathcal{X}_i(x_{-i})$. Since $\mathcal{G}$ is convex set-valued map as assumed in Assumption \ref{asmp:SetValuedMap} (b), 
	\begin{eqnarray}
	\notag
	\begin{split}
	\hat{y}\ =\ &\gamma y^1+(1-\gamma)y^2\\
	\in\ &\gamma \mathcal{G}((x_i^1,x_{-i}),w)+(1-\gamma)\mathcal{G}((x_i^2,x_{-i}),w) \\
	\subseteq\ &\mathcal{G}((\gamma x_i^1+(1-\gamma)x_i^2,x_{-i}),w).
	\end{split}
	\end{eqnarray}
	Since $f_i((x_i,x_{-i}),y,w)$ is quasi-concave with respect to $(x_i,y)$ as assumed in Assumption \ref{asmp:fi} (a), 
	\begin{eqnarray}
	\notag
    \begin{split}
	&f_i((\hat{x}_i,x_{-i}),\hat{y},w)
	\ge\\ &\quad\quad\quad\min\left\{f_i((x_i^1,x_{-i}),y^1,w),f_i((x_i^2,x_{-i}),y^2,w)\right\}.
	\end{split}
	\end{eqnarray}
	Since $(x_i^1,y^1),(x_i^2,y^2)\in \mathcal{H}_i(x_{-i},w)$,
	\begin{eqnarray}
	\notag
	\begin{split}
	f_i((x_i^1,x_{-i}),y^1,w)&=\max_{u_i,v}f_i((u_i,x_{-i}),v,w)\\
	&{\rm s.t.}\quad u_i\in \mathcal{X}_i(x_{-i}),v\in \mathcal{G}((u_i,x_{-i}),w),\\
	f_i((x_i^2,x_{-i}),y^2,w)&=\max_{u_i,v}f_i((u_i,x_{-i}),v,w)\\
	&{\rm s.t.}\quad u_i\in \mathcal{X}_i(x_{-i}),v\in \mathcal{G}((u_i,x_{-i}),w).
	\end{split}
	\end{eqnarray}
	Thus 
	\begin{eqnarray}
	\notag
	\begin{split}
	f_i((\hat{x}_i,x_{-i}),\hat{y},w)&\ge \max_{u_i,v}\ f_i((u_i,x_{-i}),v,w)\\
	&\ {\rm s.t.}\ u_i\in \mathcal{X}_i(x_{-i}),v\in \mathcal{G}((u_i,x_{-i}),w),
	\end{split}
	\end{eqnarray}
	namely, 
	\begin{eqnarray}
	\notag
	\gamma(x_i^1,y^1)+(1-\gamma)(x_i^2,y^2)\in \mathcal{H}_i(x_{-i},w).
	\end{eqnarray}
	Thus $\mathcal{H}_i(x_{-i},w)$ is a convex set, completing the proof of assertion (c).

	\textit{\textbf{Assertion (d):}} Recall the definition of $\mathcal{K}_i$ in Lemma \ref{lemmaK}, $\mathcal{H}_i(x_{-i},w)$ can be rewritten as
	\begin{eqnarray}
	\notag
	\begin{split}
	\mathcal{H}_i(x_{-i},w)=\arg&\max_{u_i,v} f_i((u_i,x_{-i}),v,w)\\
	&\ {\rm s.t.}\ (u_i,v)\in \mathcal{K}_i(x_{-i},w).
	\end{split}
	\end{eqnarray}
	Since $f_i$ is continuous and $\mathcal{K}_i$ is continuous set-valued map with compact values, the marginal map $\mathcal{H}_i$ is upper semi-continuous according to Lemma \ref{lemma:marginal}. This completes the proof of assertion (d).
\end{proof}

We start the proof of Lemma \ref{lemma:H0} with the following lemma.
\begin{lem}
	\label{lemmaJ}
	If $\mathcal{G}:X\times W\rightrightarrows Y$ and $\mathcal{W}:X\rightrightarrows W$ satisfies Assumption \ref{asmp:SetValuedMap} (b) and (c), respectively, then the set-valued map below
	\begin{eqnarray}
	\label{def:J}
	\mathcal{J}(x)\triangleq\left\{(w,y):
	w\in \mathcal{W}(x),y\in \mathcal{G}(x,w)
	\right\}
	\end{eqnarray}
	has the following properties:
	
	(a) $\mathcal{J}:X\rightrightarrows W\times Y$ is continuous in $X$;
	
	(b) $\forall x\in X$, $\mathcal{J}(x)$ is non-empty compact convex set.
\end{lem}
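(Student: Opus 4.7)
The plan is to mirror the strategy used for Lemma \ref{lemmaK}, since $\mathcal{J}$ is built from $\mathcal{W}$ and $\mathcal{G}$ in essentially the same way that $\mathcal{K}_i$ is built from $\mathcal{X}_i$ and $\mathcal{G}$. First I would verify assertion (b), because it is cheap and feeds into (a): non-emptiness follows directly from Assumption \ref{asmp:SetValuedMap} (b)--(c), which guarantee $\mathcal{W}(x)\neq\emptyset$ and, for each $w\in\mathcal{W}(x)$, $\mathcal{G}(x,w)\neq\emptyset$. Compactness follows once I show that $\mathcal{J}$ has closed graph (done in step (a)) and note that $\mathcal{J}(x)\subseteq W\times Y$, a compact set. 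For convexity, I would take $(w_1,y_1),(w_2,y_2)\in\mathcal{J}(x)$ and $\gamma\in(0,1)$: convexity of $\mathcal{W}(x)$ gives $\gamma w_1+(1-\gamma)w_2\in\mathcal{W}(x)$, and the convex graph of $\mathcal{G}$ applied to the two triples $(x,w_1,y_1),(x,w_2,y_2)\in\text{graph}\mathcal{G}$ yields $(x,\gamma w_1+(1-\gamma)w_2,\gamma y_1+(1-\gamma)y_2)\in\text{graph}\mathcal{G}$, so the convex combination lies in $\mathcal{J}(x)$.

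For assertion (a), I would argue lower and upper semi-continuity separately and then invoke their combination. For lower semi-continuity, using the sequential characterization \cite[Chapter 1.1 Definition 2]{Aubin:Differential}, I would take any $x^k\to x$ and any $(w,y)\in\mathcal{J}(x)$. Lower semi-continuity of $\mathcal{W}$ yields $w^k\in\mathcal{W}(x^k)$ with $w^k\to w$; then since $(x^k,w^k)\to(x,w)$ and $y\in\mathcal{G}(x,w)$, the lower semi-continuity of $\mathcal{G}$ produces $y^k\in\mathcal{G}(x^k,w^k)$ with $y^k\to y$. Hence $(w^k,y^k)\in\mathcal{J}(x^k)$ converges to $(w,y)$.

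For upper semi-continuity I would show the graph of $\mathcal{J}$ is closed. Since $\mathcal{W}$ and $\mathcal{G}$ are continuous with compact values on compact domains, their graphs are closed \cite[Chapter 1.4.1 Proposition 1.4.8]{Aubin:Setvalued}. Given $\{x^k\}\to x$ and $\{(w^k,y^k)\in\mathcal{J}(x^k)\}\to(w,y)$, we have $w^k\in\mathcal{W}(x^k)$ and $y^k\in\mathcal{G}(x^k,w^k)$; passing to the limit in the closed graphs gives $w\in\mathcal{W}(x)$ and $y\in\mathcal{G}(x,w)$, so $(w,y)\in\mathcal{J}(x)$. Closedness of the graph combined with compactness of the ambient space yields upper semi-continuity \cite[Chapter 1.1, Corollary 1]{Aubin:Differential}, and together with lower semi-continuity we obtain continuity of $\mathcal{J}$.

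I do not expect any genuine obstacle: every ingredient is either a direct consequence of the continuity/convexity hypotheses on $\mathcal{W}$ and $\mathcal{G}$ or a standard manipulation of set-valued maps already used in Lemma \ref{lemmaK}. The only place requiring a touch of care is the convexity step, where one must exploit the convex graph of $\mathcal{G}$ (a joint convexity in $(x,w,y)$) rather than a weaker pointwise convexity of values; this is exactly how convexity of $\mathcal{K}_i$ was handled earlier, so the argument transfers verbatim.
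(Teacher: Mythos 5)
Your proposal is correct and follows essentially the same route as the paper's own proof: lower semi-continuity via the sequential characterization (first $\mathcal{W}$, then $\mathcal{G}$), upper semi-continuity via closedness of the graph inherited from the closed graphs of $\mathcal{W}$ and $\mathcal{G}$, and convexity by combining convexity of $\mathcal{W}(x)$ with the convex graph of $\mathcal{G}$ at fixed $x$. No gaps to report.
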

\begin{proof}[Proof of Lemma \ref{lemmaJ}]
	\textcolor{white}{text}
		
	\textit{\textbf{Assertion (a):}} Similar to the proof of Lemma \ref{lemmaK}, we prove that $\mathcal{J}$ is lower semi-continuous on $X$ by recalling the fact that $\mathcal{J}$ is lower semi-continuous if and only if $\forall x^k\rightarrow x$ and $\forall (w,y)\in \mathcal{J}(x)$, there exists $(w^k,y^k)\in\mathcal{J}(x^k)$ such that $w^k\rightarrow w,y^k\rightarrow y$\cite[Chapter 1.1 Definition 2]{Aubin:Differential}. Since $\mathcal{W}$ is lower semi-continuous, there exists $w^k\in \mathcal{W}(x^k)$ such that $w^k\rightarrow w$. Similarly, since $\mathcal{G}$ is lower semi-continuous, there exists $y^k\in \mathcal{G}(x^k,w^k)$ such that $y^k\rightarrow y$. Thus $\forall x^k\rightarrow x$, $\forall (w,y)\in \mathcal{J}(x)$, we find sequence $(w^k,y^k)\in\mathcal{J}(x^k)$ such that $w^k\rightarrow w,y^k\rightarrow y$.
	
	Next, we would like to show that the graph of $\mathcal{J}$ is closed. Since both $\mathcal{W}$ and $\mathcal{G}$ are continuous set-valued map with compact domains and compact values, the graph of $\mathcal{W}$ and $\mathcal{G}$ are closed\cite[Chapter 1.4.1 Proposition 1.4.8]{Aubin:Setvalued}. Thus $\forall \{x^k\}\rightarrow x$, $\{w^k\in\mathcal{W}(x^k)\}\rightarrow w$, $\{y^k\in\mathcal{G}(x^k,w^k)\}\rightarrow y$, there are $w\in \mathcal{W}(x)$ and $y\in \mathcal{G}(x,w)$. So the graph of $\mathcal{J}$ is closed according to definition of closed set-valued map. Thus $\mathcal{J}$ is upper semi-continuous \cite[Chapter 1.1, Corollary 1]{Aubin:Differential}. Since $\mathcal{J}$ is both lower and upper semi-continuous, $\mathcal{J}$ is continuous set-valued map. 
	
	\textit{\textbf{Assertion (b):}} Since $\mathcal{W}$ and $\mathcal{G}$ are non-empty, so is $\mathcal{J}$. Since $\mathcal{J}$ is defined on compact domain and $\mathcal{J}$ is closed, it is compact. Finally, we prove that $\mathcal{J}(x)$ is convex set for any $x\in X$.  $\forall (w^1,y^1),(w^2,y^2)\in\mathcal{J}(x)$ and $\forall \gamma\in(0,1)$, there are
	\begin{eqnarray}
	\notag
	y^1\in \mathcal{G}(x,w^1),y^2\in \mathcal{G}(x,w^2).
	\end{eqnarray}
	Since $\mathcal{G}$ is convex set-valued map (i.e., the graph of $\mathcal{G}$ is convex),
	\begin{eqnarray}
	\notag
	\gamma\mathcal{G}(x,w^1)+(1-\gamma)\mathcal{G}(x,w^2)\subseteq \mathcal{G}(x,\gamma w^1+(1-\gamma)w^2).
	\end{eqnarray}
	Thus
	\begin{eqnarray}
	\label{appen_1}
	\gamma y^1+(1-\gamma)y^2\in \mathcal{G}(x,\gamma w^1+(1-\gamma)w^2).
	\end{eqnarray}
	Since $\mathcal{W}(x)$ is convex,
	\begin{eqnarray}
	\label{appen_2}
	\gamma w^1+(1-\gamma)w^2\in\mathcal{W}(x). 
	\end{eqnarray}
	Thus (\ref{appen_1}) and (\ref{appen_2}) together implies that
	\begin{eqnarray}
	\notag
	(\gamma w^1+(1-\gamma)w^2,\gamma y^1+(1-\gamma)y^2)\in \mathcal{J}(x),
	\end{eqnarray}
	which completes the proof.
\end{proof}

Next, proof of Lemma \ref{lemma:H0} is given.
\begin{proof}[Proof of Lemma \ref{lemma:H0}]
\textcolor{white}{text}

\textit{\textbf{Assertion (a):}} Since Assumption \ref{asmp:SetValuedMap} holds,
$\mathcal{W}(x)$ is non-empty compact set for any $x\in X$, and so is  $\mathcal{G}(x,w)$ for any $x\in X$ and $w\in W$. Thus $\forall x\in X$, $\mathcal{H}_S(x)$ is non-empty, completing the proof of assertion (a).

\textit{\textbf{Assertion (b):}} $\forall (w^1,y^1),(w^2,y^2)\in \mathcal{H}_S(x)$ and $\forall \gamma\in (0,1)$, denote
\begin{eqnarray}
\notag
\hat{w}=\gamma w^1 + (1-\gamma)w^2
\end{eqnarray}
and
\begin{eqnarray}
\notag
\hat{y}=\gamma y^1 + (1-\gamma)y^2.
\end{eqnarray}
Since $\mathcal{W}(x)$ is a convex set as assumed in Assumption \ref{asmp:SetValuedMap} (c), from 
\begin{eqnarray}
\notag
w^1\in \mathcal{W}(x),w^2\in \mathcal{W}(x)
\end{eqnarray}
we have $\hat{w}\in \mathcal{W}(x)$. Since $\mathcal{G}$ is convex set-valued map as assumed in Assumption \ref{asmp:SetValuedMap} (b),
\begin{eqnarray}
\notag
\begin{split}
\hat{y}\ =\ &\gamma y^1 + (1-\gamma)y^2\\
\in\  &\gamma\mathcal{G}(x,w^1)+(1-\gamma)\mathcal{G}(x,w^2)\\
\subseteq&\ \mathcal{G}(x,\lambda w^1+(1-\lambda)w^2) = \mathcal{G}(x,\hat{w}).
\end{split}
\end{eqnarray}
Since $-f_i(x,y,w)$ is concave with respect to $(y,w)$ for any $i\in S$ as assumed in Assumption \ref{asmp:fi} (b),
\begin{eqnarray}
\notag
f_i(x,\hat{y},\hat{w})\le \gamma f_i(x,y^1,w^1)+(1-\gamma)f_i(x,y^2,w^2),\forall i\in S,
\end{eqnarray}
indicating that
\begin{eqnarray}
\notag
\begin{split}
\sum_{i\in S}f_i(x,\hat{y},\hat{w})\le \gamma \sum_{i\in S}f_i&(x,y^1,w^1)\\
&+(1-\gamma)\sum_{i\in S}f_i(x,y^2,w^2).
\end{split}
\end{eqnarray}
Since $(w^1,y^1),(w^2,y^2)\in \mathcal{H}_S(x)$,
\begin{eqnarray}
\notag
\begin{split}
\sum_{i\in S}f_i(x,y^1,w^1)=&\min_{t,v} \sum_{i\in S}f_i(x,v,t)\\
&{\rm s.t.}\ t\in\mathcal{W}(x),v\in \mathcal{G}(x,t)\\
\sum_{i\in S}f_i(x,y^2,w^2)=&\min_{t,v} \sum_{i\in S}f_i(x,v,t)\\
&{\rm s.t.}\ t\in\mathcal{W}(x),v\in \mathcal{G}(x,t)
\end{split}
\end{eqnarray}
Thus
\begin{eqnarray}
\notag
\begin{split}
	\sum_{i\in S}f_i(x,\hat{y},\hat{w})\le&\min_{t,v} \sum_{i\in S}f_i(x,v,t)\\
	&{\rm s.t.}\ t\in\mathcal{W}(x),v\in \mathcal{G}(x,t),
\end{split}
\end{eqnarray}
namely, $(\hat{y},\hat{w})\in \mathcal{H}_S(x)$. So $\mathcal{H}_S(x)$ is a convex set.

\textit{\textbf{Assertion (c):}} Let $f_{S}^*$ denote the optimal objective value of the minimization problem in \eqref{def:HS}. Then $\mathcal{H}_S$ can be reformulated as the intersection of two closed sets:
\begin{eqnarray}
\notag 
\mathcal{H}_S(x)=\mathcal{J}(x)\cap \left\{(w,y)|\sum_{i\in S}f_i(x,y,w)\le f_S^*\right\}
\end{eqnarray}
where $\mathcal{J}(x)$ is defined in \eqref{def:J} and is a compact subsect of $W\times Y$ as stated in Lemma \ref{lemmaJ} (b). 
\begin{eqnarray}
\notag
\left\{(w,y)|\sum_{i\in S}f_i(x,y,w)\le f_S^*\right\}
\end{eqnarray}
is closed since $\sum_{i\in S}f_i(x,y,w)$ is continuous as assumed in Assumption \ref{asmp:fi} (c). Thus $\mathcal{H}_S(x)$ must be a closed subset of $W\times Y$. Moreover, since $W$ and $Y$ are compact, $\mathcal{H}_S(x)$ is compact, which completes the proof of assertion (c).

\textit{\textbf{Assertions (d):}}
Recall the definition of $\mathcal{H}_S$ in \eqref{def:HS}, $\mathcal{H}_S$ is a marginal map. Since $\sum_{i\in S}f_i(x,y,w)$ is continuous and $\mathcal{J}(x)$ is continuous set-valued map with compact values as stated in Lemma \ref{lemmaJ}, the marginal map $\mathcal{H}_S$ is upper semi-continuous according to Lemma \ref{lemma:marginal}. This completes the proof of assertion (d).
\end{proof}

\bibliography{mybib}

\end{document}